\newtheorem{theorem}{Theorem}
\newtheorem{definition}[theorem]{Definition}
\newtheorem{lemma}[theorem]{Lemma}
\newtheorem{proposition}[theorem]{Proposition}
\newtheorem{claim}[theorem]{Claim}
\newtheorem{remark}[theorem]{Remark}
\newtheorem{corollary}[theorem]{Corollary}
\newcommand{\rnk}{\mathrm{rank}}
\newcommand{\spn}{\mathrm{span}}
\title{Parameterized Matroid-Constrained Maximum Coverage}
\date{}
\author{François Sellier\\
\textit{Université Paris Cité, CNRS, IRIF, Paris, France}\\
\textit{Mines Paris, Université PSL, Paris, France}}
\begin{document}

\maketitle

\begin{abstract}
    In this paper, we introduce the concept of \emph{Density-Balanced Subset} in a matroid, in which independent sets can be sampled so as to guarantee that (i) each element has the same probability to be sampled, and (ii) those events are negatively correlated. These \emph{Density-Balanced Subsets} are subsets in the ground set of a matroid in which the traditional notion of uniform random sampling can be extended.
    
    We then provide an application of this concept to the \emph{Matroid-Constrained Maximum Coverage} problem. In this problem, given a matroid $\mathcal{M} = (V, \mathcal{I})$ of rank $k$ on a ground set $V$ and a coverage function $f$ on $V$, the goal is to find an independent set $S \in \mathcal{I}$ maximizing $f(S)$. This problem is an important special case of the much-studied submodular function maximization problem subject to a matroid constraint; this is also a generalization of the maximum $k$-cover problem in a graph. In this paper, assuming that the coverage function has a bounded frequency $\mu$ (\emph{i.e.}, any element of the underlying universe of the coverage function appears in at most $\mu$ sets), we design a procedure, parameterized by some integer $\rho$, to extract in polynomial time an approximate kernel of size $\rho \cdot k$ that is guaranteed to contain a $1 - (\mu - 1)/\rho$ approximation of the optimal solution. This procedure can then be used to get a Fixed-Parameter Tractable Approximation Scheme (FPT-AS) providing a $1 - \varepsilon$ approximation in time $(\mu/\varepsilon)^{O(k)} \cdot |V|^{O(1)}$. This generalizes and improves the results of~[Manurangsi, 2019] and~[Huang and Sellier, 2022], providing the first FPT-AS working on an arbitrary matroid. Moreover, because of its simplicity, the kernel construction can be performed in the streaming setting.
\end{abstract}

\paragraph*{Keywords} Matroids, approximate kernel, maximum coverage

\paragraph*{Funding} This work was funded by the grant ANR-19-CE48-0016 from the French National Research Agency (ANR)

\paragraph*{Acknowledgements} The author thanks Chien-Chung Huang, Claire Mathieu, Eli Upfal, and the anonymous reviewers for their helpful comments.

\section{Introduction}

    \paragraph*{Overview} Matroids are fundamental combinatorial structures that generalize the notion of linear independence in a vector space as well as the notion of forest in a graph. In combinatorial optimization, the matroid constraints are an important generalization of the cardinality constraint. For instance, consider the problem of maximizing a submodular function under some constraint. If the constraint is that the feasible subsets are those of size bounded by some parameter $k$ (cardinality constraint), an approximation of $1 - 1/e$ can be obtained in polynomial time by a simple greedy algorithm~\cite{nemhauser1978analysis} (this ratio is also the best possible in polynomial time unless $P = NP$, see~\cite{Feige98}). Under the more general constraint that the feasible subsets are those that are independent in a given matroid $\mathcal{M}$ (matroid constraint), an approximation of $1-1/e$ can also be achieved in polynomial time~\cite{CalinescuCPV11}, albeit by using a more involved continuous greedy technique. 
    Hence, somehow surprisingly, even though the matroid constraint is more complex than the cardinality constraint, some optimization problems are not really ``harder'' in the matroid context. 
    
    Following this perspective, in this paper we consider the problem of maximizing a coverage function of bounded frequency under some constraint. The starting point of our paper is the work of Manurangsi~\cite{Manurangsi19-sosa} in which, for cardinality constraints, a Fixed-Parameter Tractable Approximation Scheme (FPT-AS) is developed. Our main result here is a generalization of that FPT-AS to matroid constraints (Corollary~\ref{cor:fpt-as}), extending the approximate kernel construction of~\cite{Manurangsi19-sosa} to matroids (Theorem~\ref{thm:ratio-union}). A key idea in~\cite{Manurangsi19-sosa} is the use of uniform random sampling of subsets of given cardinality; unfortunately, in the matroid setting, near-uniform sampling of independent sets is in general impossible. Instead, here we introduce the concept of \emph{Density-Balanced Subset} (DBS, Definition~\ref{def:bounded-density}) in matroids. We show that in those particular subsets of the ground set we can generalize the traditional notion of uniform random sampling of a subset of a given cardinality, to that of sampling a maximum independent set, while guaranteeing that (i) every element of the DBS has the same probability of being sampled (\emph{i.e.}, the probabilities are ``balanced'') and that (ii) those events are negatively correlated (Proposition~\ref{prop:random-sampling}).

    \paragraph*{Density-Balanced Subsets} We introduce here the concept of \emph{Density-Balanced Subsets}. Let us first define the notion of density. (In the following we assume that readers already have some familiarity with matroids; an introduction to matroids is provided in the beginning of Section~\ref{sec:density}.)

    \begin{restatable}[]{definition}{densitydefinition}
    \label{def:density}
        Let $\mathcal{M} = (V, \mathcal{I})$ be a matroid. The \emph{density} of a subset $U \subseteq V$ in $\mathcal{M}$ is defined as \[\rho_{\mathcal{M}}(U) = \frac{|U|}{\rnk_{\mathcal{M}}(U)}.\]
        The density of an empty set is set to $0$, and the density of a non-empty set of rank $0$ is $+\infty$.
    \end{restatable}
    
    From that we define a \emph{Density-Balanced Subset} (DBS). Basically, in a DBS, no subset has a larger density than the DBS itself, just like a \emph{uniformly dense} matroid, but here we add the constraint that the density has to be an integer.
    
    \begin{restatable}[]{definition}{boundeddensitydefinition}
    \label{def:bounded-density}
        Let $\mathcal{M} = (V, \mathcal{I})$ be a matroid, and $\rho$ be a positive integer. A subset $V' \subseteq V$ is called a \emph{$\rho$-DBS} in $\mathcal{M}$ if $\rho_{\mathcal{M}}(V') = \rho$ and for all $U \subseteq V'$, $\rho_{\mathcal{M}}(U) \leq \rho$.
    \end{restatable}

    Density-Balanced Subsets appear naturally when extracting independent sets in matroid unions (as we will see in Section~\ref{sec:kernel}); in Figure~\ref{fig:laminar-matroid} a simple example of DBS is given.

    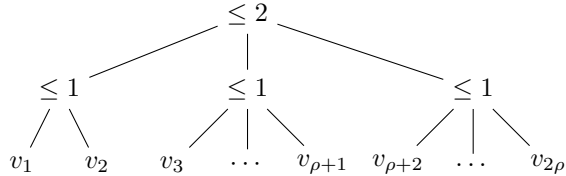
\begin{figure}[h]
    	\centering
        \begin{tikzpicture}
            \node (A1) at (1,0) {$v_1$};
            \node (A2) at (2,0) {$v_2$};
            
            \node (B1) at (3,0) {$v_3$};
            \node (B2) at (4,0) {$\cdots$};
            \node (B3) at (5,0) {$v_{\rho + 1}$};
            
            \node (C1) at (6,0) {$v_{\rho + 2}$};
            \node (C2) at (7,0) {$\dots$};
            \node (C3) at (8,0) {$v_{2\rho}$};
            
            \node (A) at (1.5,1) {$\leq 1$};
            \draw [-] (A) -- (A1);
            \draw [-] (A) -- (A2);

            \node (B) at (4,1) {$\leq 1$};
            \draw [-] (B) -- (B1);
            \draw [-] (B) -- (B2);
            \draw [-] (B) -- (B3);
            
            \node (C) at (7,1) {$\leq 1$};
            \draw [-] (C) -- (C1);
            \draw [-] (C) -- (C2);
            \draw [-] (C) -- (C3);
            
            \node (X) at (4,2) {$\leq 2$};
            \draw [-] (X) -- (A);
            \draw [-] (X) -- (B);
            \draw [-] (X) -- (C);
            
        \end{tikzpicture}
        \caption{Example of $\rho$-DBS of rank $k = 2$. The tree represents a laminar matroid $\mathcal{M} = (V, \mathcal{I})$ on the ground set $V = \{v_1, \dots, v_{2\rho}\}$: the leaves represent elements of the ground set, and the inner nodes represent cardinality constraints on the elements in their associated subtree (\emph{e.g.}, if $S \in \mathcal{I}$, then $|S \cap \{v_3, \dots, v_{\rho + 1}\}| \leq 1$). Observe that $V' = V$ is a $\rho$-DBS.}
        \label{fig:laminar-matroid}
    \end{figure}
    
    In $\rho$-DBSes, it is possible to sample independent sets while having the desired balance and negative correlation properties. Moreover, the sampled independent sets are ``maximal''  (\emph{i.e.}, these sampled independent sets are bases in the restriction of the matroid to the DBS), hence, in this sense, this extends the notion of uniform random sampling. This result comes from a more general rounding algorithm developed by Chekuri, Vondrák, and Zenklusen~\cite{ChekuriVZ10}.
    
    \begin{restatable}[]{proposition}{propsampling}
    \label{prop:random-sampling}
        Let $\mathcal{M} = (V, \mathcal{I})$ be a matroid, $V' \subseteq V$ be a $\rho$-DBS for some positive integer $\rho$, and $k = \rnk_{\mathcal{M}}(V')$. There exists a procedure to sample randomly from $V'$ an independent set of $k$ elements $S = \{s_1, \dots, s_k\} \in \mathcal{I}$ such that:
        \begin{enumerate}[(i)]
            \item for all $v \in V'$, $\mathbb{P}[v \in S] = 1/\rho$;
            \item for all $T \subseteq V'$, $\mathbb{P}[T \subseteq S] \leq \prod_{v \in T} \mathbb{P}[v \in S] = 1/\rho^{|T|}$.
        \end{enumerate}
    \end{restatable}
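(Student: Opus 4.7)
The plan is to realize the uniform marginal vector on $V'$ as a point in the base polytope of the restricted matroid $\mathcal{M}|_{V'}$, and then invoke the swap rounding of Chekuri, Vondrák, and Zenklusen~\cite{ChekuriVZ10} to convert that fractional point into a random base with the two required properties.

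Concretely, first I would consider the vector $x \in \mathbb{R}^{V'}$ defined by $x_v = 1/\rho$ for every $v \in V'$. Since $V'$ is a $\rho$-DBS we have $|V'| = \rho \cdot k$, so $\sum_{v \in V'} x_v = k = \rnk_{\mathcal{M}}(V')$. Moreover, for every $U \subseteq V'$ the DBS inequality $\rho_{\mathcal{M}}(U) \leq \rho$ rewrites as $|U| \leq \rho \cdot \rnk_{\mathcal{M}}(U)$, which gives $\sum_{v \in U} x_v = |U|/\rho \leq \rnk_{\mathcal{M}}(U)$. These are exactly the facet inequalities of the base polytope of $\mathcal{M}|_{V'}$ (together with the rank constraint saturated at $U = V'$), so $x$ lies in this base polytope.

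Second, I would write $x$ as an explicit convex combination of bases of $\mathcal{M}|_{V'}$. This can be done in polynomial time by standard matroid polytope decomposition (iteratively finding a base in the support of the current fractional point using matroid intersection/greedy, peeling off the largest feasible multiple, and recursing); Carathéodory guarantees a decomposition using at most $|V'| + 1$ bases. I would then feed this decomposition into the swap rounding procedure of~\cite{ChekuriVZ10}, which produces a random base $S$ of $\mathcal{M}|_{V'}$, i.e., an independent set of $\mathcal{M}$ with $|S| = k$. The guarantees of swap rounding are exactly that $\mathbb{P}[v \in S] = x_v = 1/\rho$ for every $v \in V'$, giving property~(i), and that the indicator variables $(\mathbbm{1}[v \in S])_{v \in V'}$ satisfy negative cylinder dependence, which in particular yields $\mathbb{P}[T \subseteq S] \leq \prod_{v \in T} x_v = 1/\rho^{|T|}$ for every $T \subseteq V'$, giving property~(ii).

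There is no single hard step here; the only real content is the observation that the DBS hypothesis is exactly what makes the uniform vector $x = (1/\rho) \mathbbm{1}_{V'}$ feasible in the matroid base polytope, after which everything is delegated to an off-the-shelf rounding result. The mild subtlety to watch is that one must invoke swap rounding on the restriction $\mathcal{M}|_{V'}$ (not on $\mathcal{M}$ itself), so that $x$ sits in the base polytope and the output is a base of $V'$ of size $k$; that a base of $\mathcal{M}|_{V'}$ is also independent in $\mathcal{M}$ is immediate from the definition of matroid restriction.
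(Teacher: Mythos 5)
Your proposal is correct and takes essentially the same route as the paper: both reduce the statement to the observation that the DBS condition makes the point $\frac{1}{\rho}\mathbbm{1}_{V'}$ feasible for the matroid (base) polytope and then delegate properties~(i) and~(ii) to the negative-correlation rounding of~\cite{ChekuriVZ10}. The only real difference is how the fractional point is certified: the paper uses Proposition~\ref{prop:dbs-partition} (Edmonds' partition theorem) to write $V'$ as a disjoint union of $\rho$ independent sets, which immediately yields the explicit convex combination, whereas you check the rank inequalities directly from the DBS definition and invoke generic Carath\'eodory-type decomposition in the base polytope of $\mathcal{M}|V'$ --- a cosmetic variation, with your restriction-to-$V'$ step being a slightly more careful way of guaranteeing $|S| = k$.
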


    \paragraph*{Matroid-Constrained Maximum Coverage} Let $\mathcal{M} = (V, \mathcal{I})$ be a matroid of rank $k$ on a ground set $V$. Given a universe $U$, a weight function $w: U \rightarrow \mathbb{R}_+$, and a family $\{U_v\}_{v \in V}$ of subsets of $U$, the \emph{matroid-constrained maximum coverage} problem is to select a subset $S \in \mathcal{I}$ maximizing the coverage function $f(S) = w(\bigcup_{s \in S}U_s) = \sum_{u \in \bigcup_{s \in S}U_s}w(u)$, namely, to find an element in
    \[\mathop{\arg\max}_{S \in \mathcal{I}} f(S).\]
    The \emph{frequency} of an element of the universe $u \in U$ refers to the number of sets $U_v$ in which it appears. We say that $f$ has \emph{bounded frequency} $\mu$ if every element of the underlying universe $U$ has a frequency bounded by $\mu$. In our paper, we will focus on the matroid-constrained maximum coverage problem for coverage functions having some bounded frequency $\mu$; this assumption is quite common and is used for instance in~\cite{BonnetPS16, FeldmannSLM20, HuangS22-swat, McGregor21-cover, SkowronF15, Skowron17}. An important special case is $\mu = 2$, where it corresponds to the coverage function over edges in a graph. Besides the frequency parameter, another parameter $z$, corresponding to the number of points covered in an optimal solution, has been used to design FTP algorithms in~\cite{HuangW2020}.
    
    To put our problem in a larger picture, we can first observe that a coverage function is a special case of monotone submodular function. For the problem of maximizing a monotone submodular function under a matroid constraint, an approximation of $1-1/e$ can be achieved in polynomial time~\cite{CalinescuCPV11} using continuous greedy and pipage rounding techniques. Later, a combinatorial approach for maximizing coverage functions over matroids was developed to achieve the same ratio~\cite{FilmusW12-cover}, and this approach was then generalized to monotone submodular functions~\cite{FilmusW14-submod}. 
    
    A special case of our problem when the matroid constraint is simply a cardinality constraint (\emph{i.e.}, a uniform matroid) has been studied in~\cite{AgeevS04, Feige98, Hochbaum1998,  Srinivasan01}. It has been shown in~\cite{Hochbaum1998} that a simple greedy procedure (picking at each step the element maximizing the increase of the coverage function) guarantees a ratio of $1-1/e$. If a polynomial time algorithm could approximate maximum coverage within a ratio of $1 - 1/e + \varepsilon$ for some $\varepsilon > 0$, then it would imply that $P = NP$~\cite{Feige98}. Furthermore, one cannot obtain in FPT time (where the matroid rank $k$ is the parameter) an approximation ratio better than $1-1/e+\varepsilon$, assuming GAP-ETH~\cite{Manurangsi20}. However, when the coverage function has bounded frequency $\mu$, an approximation of $1 - (1 - 1/\mu)^{\mu}$ can be achieved~\cite{AgeevS04}. 
    %also converging to the ratio $1-1/e$ as $\mu$ grows.  
    
    %Other constraints have also been suggested in the literature, for instance the budgeted maximum coverage has been studied in~\cite{KhullerMN99}. 
    
    The case where the coverage function has a frequency $\mu$ bounded by $2$ is called the \textsc{matroid-constrained vertex cover}~\cite{HuangS22-swat,HuangS23}, and is called \textsc{max $k$-vertex-cover} when the matroid is uniform. The latter has also been studied through the lens of fixed-parameterized-tractability. The problem is $W[1]$-hard with $k$ being the parameter~\cite{GuoNW05}, thus getting an exact solution in FPT time is unlikely. Nonetheless, it is possible to get a near-optimal solution in FPT time~\cite{Marx08}. Precisely, an FPT approximation scheme (FPT-AS) is given in~\cite{Marx08}, that delivers a $1-\varepsilon$ approximate solution in $(k/\varepsilon)^{O(k^3/\varepsilon)} \cdot |V|^{O(1)}$ time, later improved to $(1/\varepsilon)^{O(k)} \cdot |V|^{O(1)}$ in~\cite{Manurangsi19-sosa, SkowronF15} .
    
    Here we recall the definition of an FPT-AS, introduced by Marx~\cite{Marx08}:
    \begin{definition}
        Given a parameter function $\kappa$ associating a positive integer to each instance $x \in I$ of some problem, a \emph{Fixed-Parameter Tractable Approximation Scheme (FPT-AS)} is an algorithm that provides a $(1-\varepsilon)$ approximate solution in time $g(\varepsilon, \kappa(x)) \cdot |x|^{O(1)}$ for some computable function $g$.
    \end{definition}
    In our case, each of the instances consists of a bounded-frequency coverage function and a matroid, and the parameter of an instance is the rank $k$ of its matroid.
    
    For our problem, an FPT-AS has been designed for partition, laminar, and transversal matroids in~\cite{HuangS22-swat}, where the concept of \emph{robust subset} is introduced to generalize the random sampling argument developed in~\cite{Manurangsi19-sosa}. In~\cite{HuangS22-swat}, an approximate kernel\footnote{An $\alpha$-approximate kernel~\cite{LokshtanovPRS17, Manurangsi19-sosa} for some parameterized optimization problem is a pair of polynomial time algorithms $\mathcal{A}$, the \emph{reduction algorithm}, and $\mathcal{B}$, the \emph{solution lifting algorithm}, such that (i) given an instance $(x, \kappa(x))$, $\mathcal{A}$ produces another instance $(x', \kappa(x'))$ such that $|x'|$, $\kappa(x')$ are bounded by $g(\kappa(x))$ and (ii) given a $\beta$ approximate solution $S'$ for $(x', \kappa(x'))$, $\mathcal{B}$ produces a solution $S$ of $(x, \kappa(x))$ such that $S$ is an $\alpha\beta$ approximate solution for $(x, \kappa(x))$. In the following we will use the terms ``kernel'' and ``approximate kernel'' interchangeably, dropping the adjective.} is extracted, consisting of a maximum weight independent set in the union of several copies of the same matroid $\mathcal{M}$ (see below for a definition of the union of matroids), and then a brute-force enumeration is performed on that kernel of small size. The number of matroids in these unions depends on the type of matroid.
    
    \begin{restatable}[]{definition}{definitionmatroidunion} \label{def:matroid-union}
	    Let $\mathcal{M}=(V, \mathcal{I})$ be a matroid. Then we can define $\rho \mathcal{M} = (V, \mathcal{I}_{\rho})$ as the \emph{union of $\rho$ matroids $\mathcal{M}$}, as follows: $S \in \mathcal{I}_{\rho}$ if $S$ can be partitioned into $S_1\cup \cdots \cup S_{\rho}$ so that for all $i$ we have $S_i \in \mathcal{I}$.
	\end{restatable}
	
	It is known that the union of matroids is still a matroid and that an independence oracle for $\rho \mathcal{M}$ can be implemented in polynomial time given an independence oracle for $\mathcal{M}$, \emph{e.g.}, see~\cite{Sch2003}. Moreover, the rank of $\rho \mathcal{M}$ is at most $\rho$ times the rank of $\mathcal{M}$.
	
	In this paper, we prove that a maximum weight independent set in $\rho \mathcal{M}$ contains an approximate solution of the problem, and that this ratio does not depend on the type of matroid (unlike~\cite{HuangS22-swat}):
    
    \begin{restatable}[]{theorem}{theoremunion} \label{thm:ratio-union}
        Let $\mathcal{M} = (V , \mathcal{I})$ be a matroid and let $f$ be a coverage function on $V$ of frequency bounded by $\mu$. Let $V'$ be a maximum weight independent set in $\rho \mathcal{M}$, with respect to the weights $f(\{v\})$. Then $V'$ contains a $1 - (\mu - 1)/\rho$ approximate solution of the matroid-constrained maximum coverage problem.
    \end{restatable}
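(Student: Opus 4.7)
The plan is to prove the existence of an independent set $S \subseteq V'$ of $\mathcal{M}$ satisfying $f(S) \geq (1 - (\mu-1)/\rho) f(S^*)$, where $S^*$ denotes an optimal solution, via a probabilistic argument built on Proposition~\ref{prop:random-sampling}. First I would argue that, up to discarding zero-weight elements, $V'$ is itself a $\rho$-DBS: since $V' \in \mathcal{I}_\rho$ we have $\rho_\mathcal{M}(U) \leq \rho$ for every $U \subseteq V'$, and being a maximum-weight basis of $\rho\mathcal{M}$ forces $|V'| = \rho \cdot \rnk_\mathcal{M}(V')$ (otherwise one could augment $V'$ with a zero-weight element while preserving independence).

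Then I would apply Proposition~\ref{prop:random-sampling} to sample $S \in \mathcal{I}$ contained in $V'$, with each element included with probability $1/\rho$ and with negative correlation in both directions (the two-sided negative correlation is a property of the Chekuri--Vondr\'{a}k--Zenklusen rounding underlying the proposition). Writing $N(u) = \{v \in V : u \in U_v\}$ and using $|N(u)| \leq \mu$, the expected coverage decomposes over the universe as
\[
\mathbb{E}[f(S)] = \sum_u w(u)\, \mathbb{P}[S \cap N(u) \neq \emptyset] \geq \sum_u w(u)\bigl(1 - (1 - 1/\rho)^{|V' \cap N(u)|}\bigr).
\]

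The key step is to lower bound this quantity by $(1 - (\mu-1)/\rho) f(S^*)$. My plan is to leverage the max-weight property of $V'$: for every $s^* \in S^* \setminus V'$ there is a circuit $C_{s^*} \subseteq V' \cup \{s^*\}$ in $\rho\mathcal{M}$ of size $\rho \cdot \rnk_\mathcal{M}(C_{s^*}) + 1$, in which every element other than $s^*$ has weight at least $f(\{s^*\})$, and moreover $C_{s^*} \setminus \{s^*\}$ is itself a $\rho$-DBS contained in $V'$. Combining this circuit/replacement structure with the frequency bound, I would charge each universe element covered by $s^*$ against the heavy replacements inside $V'$: since at most $\mu - 1$ elements other than $s^*$ can cover a given universe point, and each $\rho$-DBS furnishes $\rho$ candidate replacements sampled with probability $1/\rho$, the expected loss of coverage should amount to at most a $(\mu-1)/\rho$ fraction of $f(S^*)$.

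The main obstacle is obtaining exactly the factor $1 - (\mu-1)/\rho$ rather than a weaker bound. A purely pointwise application of negative correlation only yields $\mathbb{P}[u \text{ covered}] \geq 1 - (1 - 1/\rho)^{|V' \cap N(u)|}$, which can be as small as $1/\rho$ when $|V' \cap N(u)| = 1$. The tight bound must combine the negative-correlated sampling with the circuit structure of $\rho\mathcal{M}$, so as to amortize the coverage loss across the at most $\mu - 1$ elements (other than $s^*$) that can cover any given universe point, while using that each $s^* \in S^* \setminus V'$ admits $\rho$ replacements inside $V'$ that are each heavier than $s^*$.
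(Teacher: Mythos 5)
Your opening reduction is where the argument first breaks: being a maximum-weight independent set in $\rho\mathcal{M}$ does \emph{not} force $|V'| = \rho\cdot\rnk_{\mathcal{M}}(V')$. Take $\mathcal{M}$ uniform of rank $k$ on a ground set of $n < \rho k$ elements, all with positive weight; then $V' = V$, its density is $n/k < \rho$ (possibly not even an integer), and discarding zero-weight elements does not help since there need not be any. In the paper this phenomenon is exactly the remainder set $R$ in the decomposition $V' = X_1 \cup \dots \cup X_r \cup R$ produced by Algorithm~\ref{algo:kernel-construction}: only the layers $X_i$ are $\rho$-DBSes, and each $X_i$ is a DBS in the \emph{contracted} matroid $\mathcal{M}/(X_1 \cup \dots \cup X_{i-1})$, not in $\mathcal{M}$ itself, so Proposition~\ref{prop:random-sampling} cannot be applied to all of $V'$ at once as your plan requires.

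The second and more serious gap is the one you acknowledge yourself: the key lower bound by $(1-(\mu-1)/\rho)\,f(S^*)$ is never established, and the sampling scheme you propose cannot deliver it. Sampling \emph{every} element of $V'$ with marginal $1/\rho$ also downsamples $O \cap V'$; when the optimum lies inside the kernel and covers universe elements touched by no other kernel element, your bound $1-(1-1/\rho)^{|V'\cap N(u)|}$ gives only about $f(O)/\rho$, so no pointwise use of negative correlation can close the gap. Your circuit observation is in itself correct (a circuit of $\rho\mathcal{M}$ minus one element is a $\rho$-DBS whose elements are at least as heavy as the removed element), but to turn it into a proof you would still need to decide which DBS each $s^*\in O\setminus V'$ is charged to, guarantee that no replacement is charged twice, and guarantee that the sampled replacements together with $O\cap V'$ form a single independent set of $\mathcal{M}$ — none of which your charging sketch addresses. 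The paper resolves all three points by a different construction: it keeps $O^{in}=O\cap V'$ deterministically, uses Proposition~\ref{prop:dbs-contraction} to extract from each layer $X_i$ a $\rho$-DBS $X_i'$ in $\mathcal{M}/(O^{in}\cup X_1\cup\dots\cup X_{i-1})$ and samples only there (so independence of $O^{in}\cup\tilde S$ is automatic), matches $|O\setminus V'|$ to the total sampled rank by a counting argument (Claim~\ref{claim:ineg-k}), uses the greedy degree ordering to show each sampled replacement is at least as heavy as the optimal element it replaces (Claims~\ref{claim:x-greater} and~\ref{claim:replace-greater}), and then pays only $(\mu-1)/\rho$ for overcounting inside $\tilde S$ (negative correlation) and for the interaction term $E(O^{in},\tilde S)$ (union bound). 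These are precisely the ingredients missing from your proposal, so as written it does not prove the stated ratio.
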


    The proof of Theorem~\ref{thm:ratio-union} relies on a reinterpretation of the greedy algorithm (extracting $V'$ in the matroid $\rho\mathcal{M}$, \emph{i.e.}, Algorithm~\ref{algo:simple-kernel-construction}) as the process of constructing $\rho$-DBSes in a series of contracted matroids of $\mathcal{M}$ (see Algorithm~\ref{algo:kernel-construction}).
    These DBSes that appear during the construction of the kernel can then be used for random sampling purposes, allowing us to generalize the argument of~\cite{Manurangsi19-sosa} for uniform matroids to any matroid (details in Section~\ref{sec:kernel}). 
    
    \begin{remark}
        The simplicity of our characterization of the kernel implies that our kernelization process can be easily turned into a streaming algorithm, as in~\cite{HuangS22-swat}. In fact, assuming that the matroid as well as the cover function is provided to the algorithm as oracles, maintaining a maximum weight independent set in $\rho \mathcal{M}$ with respect to the weights $f(\{v\})$ can be done in streaming use $O(\rho \cdot k)$ memory.
    \end{remark}
    
    By taking the appropriate value $\rho = \lceil (\mu - 1)/\varepsilon\rceil$ and performing a bruteforce enumeration on the approximate kernel described in Theorem~\ref{thm:ratio-union} (that kernel would be of size $\rho \cdot k = O(k \cdot \mu/\varepsilon)$ and could be extracted in polynomial time (assuming that an independence oracle for $\mathcal{M}$ and an oracle for $f$ are given), we obtain an FPT-AS, extending the result of~\cite{Manurangsi19-sosa}:
    
    \begin{corollary} \label{cor:fpt-as}
        There exists an algorithm that computes a $1 - \varepsilon$ approximate solution of the matroid-constrained maximum coverage in $(\mu/\varepsilon)^{O(k)} \cdot |V|^{O(1)}$ time.
    \end{corollary}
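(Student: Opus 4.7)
The plan is to instantiate Theorem~\ref{thm:ratio-union} with a carefully chosen $\rho$ and then brute-force over the small kernel it produces. Specifically, I would set $\rho = \lceil (\mu - 1)/\varepsilon \rceil$, so that the ratio $1 - (\mu-1)/\rho$ guaranteed by Theorem~\ref{thm:ratio-union} is at least $1 - \varepsilon$. First I would compute a maximum weight independent set $V'$ in $\rho \mathcal{M}$, with element weights $f(\{v\})$. This can be done in polynomial time by the standard matroid greedy, since an independence oracle for $\rho\mathcal{M}$ can be simulated in polynomial time from the one for $\mathcal{M}$ (as recalled after Definition~\ref{def:matroid-union}). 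The resulting set satisfies $|V'| \leq \rnk(\rho\mathcal{M}) \leq \rho k$.

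By Theorem~\ref{thm:ratio-union}, there exists $S^* \subseteq V'$ with $S^* \in \mathcal{I}$ and $f(S^*) \geq (1 - \varepsilon)\,\mathrm{OPT}$. Since $\rnk_{\mathcal{M}}(V) = k$, we may also assume $|S^*| \leq k$. The second step of the algorithm is therefore to enumerate every subset $S \subseteq V'$ with $|S| \leq k$, test whether $S \in \mathcal{I}$ using the independence oracle, evaluate $f(S)$ using the coverage oracle, and return the best feasible one. The output will be at least as good as $S^*$, hence a $(1 - \varepsilon)$-approximation of $\mathrm{OPT}$.

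For the running time, the number of candidate subsets is at most $\sum_{i=0}^{k} \binom{|V'|}{i} \leq (k+1)\binom{\rho k}{k}$, which by the standard inequality $\binom{\rho k}{k} \leq (e\rho)^k$ is bounded by $(e\rho)^{O(k)} = (\mu/\varepsilon)^{O(k)}$. Each candidate is processed in $|V|^{O(1)}$ time via the two oracles, yielding a total running time of $(\mu/\varepsilon)^{O(k)} \cdot |V|^{O(1)}$, as claimed.

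The corollary follows essentially immediately from Theorem~\ref{thm:ratio-union} plus the polynomial-time solvability of maximum-weight matroid union, so no single step is a serious obstacle; the only mildly delicate point is confirming that the enumeration cost really fits in the claimed bound, but this is just the binomial estimate above.
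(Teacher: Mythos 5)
Your proposal matches the paper's argument: the paper likewise sets $\rho = \lceil (\mu-1)/\varepsilon \rceil$, extracts the kernel of Theorem~\ref{thm:ratio-union} (a maximum weight independent set in $\rho\mathcal{M}$, of size at most $\rho k$, computable in polynomial time by the matroid greedy), and brute-forces over it. Your explicit restriction to subsets of size at most $k$ and the binomial estimate $\binom{\rho k}{k} \leq (e\rho)^k$ correctly justify the $(\mu/\varepsilon)^{O(k)} \cdot |V|^{O(1)}$ bound, which the paper leaves implicit.
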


    To see the interest of this result, we note that if the only parameter is the rank $k$ of the matroid, one cannot obtain in FPT time an approximation ratio better than $1-1/e+\varepsilon$ (assuming GAP-ETH) for a general coverage function~\cite{Manurangsi20}, even if the matroid is the simplest uniform matroid. In contrast, our result shows that it is possible to break through this lower bound and get arbitrarily close to $1$ for an arbitrary matroid when the frequency of the coverage function is bounded. We also emphasize here that the randomization is only used in the analysis; our algorithm itself is deterministic.
    
    %as well as its approximation ratio.
    
    An important special case of Theorem~\ref{thm:ratio-union} is when $\mu = 2$, which corresponds to the matroid-constrained maximum vertex cover problem, and for which we have:
    
    \begin{corollary}
        Let $\mathcal{M} = (V , \mathcal{I})$ be a matroid and let $G = (V, E)$ be a weighted graph. Let $V'$ be a maximum weight independent set in $\rho \mathcal{M}$, with respect to the weighted degrees $\deg_w(v)$. Then $V'$ contains a $1 - 1/\rho$ approximate solution of the matroid-constrained maximum vertex cover problem.
    \end{corollary}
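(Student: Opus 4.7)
The plan is simply to cast matroid-constrained maximum vertex cover as an instance of matroid-constrained maximum coverage with bounded frequency $\mu=2$, and then invoke Theorem~\ref{thm:ratio-union}. The underlying universe of the coverage function is the edge set $E$ with the weight function $w$, and for every vertex $v \in V$ we let $U_v = \delta(v)$ be the set of edges incident to $v$. With this encoding, the value $f(S)$ of a subset $S \subseteq V$ is exactly the total weight of edges covered by $S$ (an endpoint in $S$), which matches the standard definition of matroid-constrained maximum vertex cover.

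Next I would verify the two parameters that feed into Theorem~\ref{thm:ratio-union}. First, every edge $e = \{u,v\} \in E$ appears in exactly the two sets $U_u$ and $U_v$, so the frequency of the coverage function is $\mu = 2$. Second, the singleton coverage value is
\[
f(\{v\}) \;=\; w(U_v) \;=\; \sum_{e \in \delta(v)} w(e) \;=\; \deg_w(v),
\]
so the weights used to build the maximum weight independent set $V'$ in $\rho\mathcal{M}$ in the corollary coincide with the singleton weights $f(\{v\})$ required by Theorem~\ref{thm:ratio-union}.

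With the encoding and these two identifications in place, Theorem~\ref{thm:ratio-union} applies verbatim and yields that $V'$ contains a $1 - (\mu-1)/\rho = 1 - 1/\rho$ approximate solution, which is the claim. There is essentially no obstacle: the corollary is a direct specialization, and the only thing to check carefully is that a maximum vertex cover instance is indeed an instance of matroid-constrained maximum coverage of frequency $2$ with the weight identification above, which we have just done.
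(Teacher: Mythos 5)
Your proposal is correct and matches the paper's own treatment: the corollary is exactly the specialization of Theorem~\ref{thm:ratio-union} to $\mu = 2$, obtained by viewing each vertex $v$ as covering its incident edge set $\delta(v)$, so that $f(\{v\}) = \deg_w(v)$ and the ratio becomes $1 - (\mu-1)/\rho = 1 - 1/\rho$. Nothing further is needed.
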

    
    This extends and improves the previous kernelization results for this problem~\cite{HuangS22-swat, Manurangsi19-sosa}. In fact, in~\cite{Manurangsi19-sosa} the $1 - 1/\rho$ approximation is attained for the union of $\rho$ uniform matroids, and in~\cite{HuangS22-swat} that ratio is attained either for the union of $\rho$ partition matroids, $2 \cdot \rho$ laminar matroids, or $\rho + k - 1$ (reduced later to $\rho$ in~\cite{Kamiyama22-robust}) transversal matroids.

\section{Density-Balanced Subsets}
    \label{sec:density}
    
    Let us start with some definitions. Given a finite set $V$, a matroid is a pair $\mathcal{M} = (V, \mathcal{I})$ where $\mathcal{I} \subseteq \mathcal{P}(V)$ is a family of subsets in $V$ that satisfies the following three conditions:
    \begin{enumerate}[(1)]
        \item $\emptyset \in \mathcal{I}$, 
        \item if $X\subseteq Y \in \mathcal{I}$, then $X\in \mathcal{I}$,
        \item if $X, Y \in \mathcal{I}, |Y| > |X|$, then there exists an element $e \in Y \backslash X$ so that $X \cup \{e\} \in \mathcal{I}$.
    \end{enumerate}

    The set $V$ is called the \emph{ground set} of the matroid and the elements of $\mathcal{I}$ are called the \emph{independent sets}.
    The notion of matroid clearly generalizes that of linear independence in vector spaces; moreover, it also generalizes that of graph: in a \emph{graphic matroid}, the edges of the graph form the ground set and the independent sets are the acyclic sets of edges, namely, the forests. %As a result, matroids are of fundamental importance in combinatorial optimization, as they generalize a large variety of constraints.
    
    Matroids terminology borrows concepts from vector spaces as well as graph theory. The \emph{rank} of a subset $X \subseteq V$ is defined as $\rnk_{\mathcal{M}}(X) = \max_{Y \subseteq X,\,Y \in \mathcal{I}}|Y|$; the rank of a matroid is defined as $\rnk_{\mathcal{M}}(V)$. The \emph{span} of a subset $X \subseteq V$ in the matroid $\mathcal{M}$ is defined as $\spn_{\mathcal{M}}(X) = \{x \in V : \rnk_{\mathcal{M}} (X \cup \{x\}) = \rnk_{\mathcal{M}}(X)\}$, and these elements in the \emph{span} are called \emph{spanned by $X$} in $\mathcal{M}$. A subset $C \subseteq V$ is a \emph{circuit} if $C$ is a minimal non-independent set, \emph{i.e.}, for every $v \in C$, $C \backslash \{v\} \in \mathcal{I}$. An element in $V$ that is a circuit by itself is called a \emph{loop}. For more details about matroids, we refer the reader to~\cite{Sch2003}.
    
    We recall the definition of a \emph{restriction} and a \emph{contraction} of a matroid. Performing such operation on a matroid results in another matroid.
    
    \begin{definition}[Restriction]
         Let $\mathcal{M} = (V, \mathcal{I})$ be a matroid, and let $V' \subseteq V$ be a subset. Then we define the \emph{restriction of $\mathcal{M}$ to $V'$} as $\mathcal{M}|V' = (V', \mathcal{I}')$ where $\mathcal{I'} = \{S \subseteq V' : S \in \mathcal{I}\}$
    \end{definition}
    
    \begin{definition}[Contraction]
        Let $\mathcal{M} = (V, \mathcal{I})$ be a matroid, and let $U$ be a subset of $V$. Then we define the \emph{contracted matroid $\mathcal{M}/U = (V \backslash U, \mathcal{I}_U)$} so that, given a maximum independent subset $\mathcal{B}_U$ of $U$, $\mathcal{I}_U = \{S \subseteq V \backslash U : S \cup \mathcal{B}_U \in \mathcal{I}\}$.
    \end{definition}
    
    It is well-known that any choice of $\mathcal{B}_U$  produces the same $\mathcal{I}_U$, as a result the definition of contraction is unambiguous. 
    The following proposition comes directly from the definition. 
	\begin{proposition} \label{prop:rank-contraction}
		Let $\mathcal{M} = (V, \mathcal{I})$ be a matroid and let $A \subseteq B \subseteq V$. Then we have $\rnk_{\mathcal{M}/A}(B\backslash A)  = \rnk_{\mathcal{M}}(B) - \rnk_{\mathcal{M}}(A)$.
	\end{proposition}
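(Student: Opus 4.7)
The plan is to unwind the definition of contraction and use the exchange axiom to line up an independent set of $A$ with one of $B$. Fix a maximum independent subset $\mathcal{B}_A \subseteq A$, so $|\mathcal{B}_A| = \rnk_{\mathcal{M}}(A)$. By definition of $\mathcal{M}/A$, the independent subsets of $B \setminus A$ in $\mathcal{M}/A$ are exactly the sets $S \subseteq B \setminus A$ with $S \cup \mathcal{B}_A \in \mathcal{I}$, so I need to show that the maximum cardinality of such an $S$ is $\rnk_{\mathcal{M}}(B) - \rnk_{\mathcal{M}}(A)$.

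For the lower bound, I would start from $\mathcal{B}_A$, which is independent in $\mathcal{M}$, and repeatedly apply the matroid exchange property (axiom (3)) using a maximum independent subset of $B$ as the ``larger'' set. This extends $\mathcal{B}_A$ to an independent set $\mathcal{B}_B \subseteq B$ with $|\mathcal{B}_B| = \rnk_{\mathcal{M}}(B)$ and $\mathcal{B}_A \subseteq \mathcal{B}_B$. Setting $S = \mathcal{B}_B \setminus \mathcal{B}_A \subseteq B \setminus A$, we have $S \cup \mathcal{B}_A = \mathcal{B}_B \in \mathcal{I}$, so $S$ is independent in $\mathcal{M}/A$, and $|S| = \rnk_{\mathcal{M}}(B) - \rnk_{\mathcal{M}}(A)$.

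For the upper bound, let $S \subseteq B \setminus A$ be independent in $\mathcal{M}/A$. Then $S \cup \mathcal{B}_A$ is an independent set of $\mathcal{M}$ contained in $B$, so $|S| + \rnk_{\mathcal{M}}(A) = |S \cup \mathcal{B}_A| \leq \rnk_{\mathcal{M}}(B)$, giving $|S| \leq \rnk_{\mathcal{M}}(B) - \rnk_{\mathcal{M}}(A)$. Combining the two bounds yields the claimed equality.

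There is essentially no obstacle here: the result is a textbook identity for matroid contraction, and the only point requiring a little care is invoking axiom (3) to guarantee that any independent subset of $A$ can be extended within $B$ to a maximum independent subset of $B$. Once that extension is available, both inequalities follow directly from the definition of $\mathcal{I}_U$.
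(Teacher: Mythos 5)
Your proof is correct and is exactly the direct-from-definition argument the paper has in mind: the paper offers no separate proof, stating only that the proposition ``comes directly from the definition'' of contraction, and your two bounds are the standard way to spell that out. The only step worth an explicit word is that $\mathcal{B}_B \setminus \mathcal{B}_A \subseteq B \setminus A$: since $\mathcal{B}_B \cap A$ is independent and contains the maximum independent subset $\mathcal{B}_A$ of $A$, the exchange steps can never add an element of $A$, so $\mathcal{B}_B \cap A = \mathcal{B}_A$ and the inclusion holds.
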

	
	In this paper, we will use the notion of density of a subset in a matroid. We recall the definition. We can observe from that definition that the density of a non-empty set is always larger or equal to one.
    \densitydefinition*

    We now introduce the notion of \emph{Density-Balanced Subset} (DBS).
    \boundeddensitydefinition*
    
    Here is a theorem due to Edmonds~\cite{edmonds1965minimumpartition} that will allow us to get another characterization of Density-Balanced Subsets.
    
    \begin{theorem}[Theorem 1 in~\cite{edmonds1965minimumpartition}] \label{thm:edm-partition}
        The elements of a matroid $\mathcal{M}$ can be partitioned into as few as $\rho$ sets, each of which is independent, if and only if there is no subset $A$ of elements of $\mathcal{M}$ for which $|A| > \rho \cdot \rnk_{\mathcal{M}}(A)$.
    \end{theorem}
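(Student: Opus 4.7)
The plan is to prove both directions of the equivalence. The necessity direction is a direct counting argument: if $V = I_1 \sqcup \cdots \sqcup I_\rho$ with each $I_j \in \mathcal{I}$, then for any $A \subseteq V$ each intersection $A \cap I_j$ is an independent subset of $A$, hence $|A \cap I_j| \leq \rnk_{\mathcal{M}}(A)$; summing over $j$ gives $|A| \leq \rho \cdot \rnk_{\mathcal{M}}(A)$.

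For the sufficiency direction, the approach I would take is the standard exchange/augmenting-path argument. Among all tuples $(I_1, \ldots, I_\rho)$ of pairwise disjoint members of $\mathcal{I}$, pick one whose union $W = I_1 \sqcup \cdots \sqcup I_\rho$ is of maximum size. Suppose for contradiction that $W \neq V$ and fix $v_0 \in V \setminus W$; note that $v_0$ cannot be a loop, for otherwise $\{v_0\}$ already violates the hypothesis (size $1$, rank $0$). Build the \emph{exchange digraph} $D$ on $V$ whose arcs are $e \to f$ whenever there exists $j$ with $f \in I_j$, $e \notin I_j$, and $I_j - f + e \in \mathcal{I}$; informally, $e \to f$ means ``to place $e$ in $I_j$, displace $f$''. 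Let $R$ denote the set of vertices reachable from $v_0$ in $D$.

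If some $u \in R$ can be inserted directly into some class, i.e., $I_j + u \in \mathcal{I}$ for some $j$ with $u \notin I_j$, then taking a shortest directed path $v_0 \to v_1 \to \cdots \to v_\ell = u$ in $D$ and performing the corresponding swaps in order yields a new tuple of disjoint independent sets whose union is $W \cup \{v_0\}$, contradicting maximality. Otherwise no such direct insertion exists: for every $u \in R$ and every $j$ with $u \notin I_j$, the set $I_j + u$ is dependent, so the fundamental circuit $C(u, I_j) \subseteq I_j + u$ is well defined, and each $e \in C(u, I_j) \setminus \{u\}$ satisfies $I_j - e + u \in \mathcal{I}$, producing an arc $u \to e$; hence $C(u, I_j) \setminus \{u\} \subseteq R \cap I_j$, which forces $u \in \spn_{\mathcal{M}}(R \cap I_j)$. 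Consequently $R \subseteq \spn_{\mathcal{M}}(R \cap I_j)$ for every $j$, and since $R \cap I_j$ is independent, $\rnk_{\mathcal{M}}(R) = |R \cap I_j|$ for each $j$ with $R \cap I_j \neq \emptyset$. The case $R \cap I_j = \emptyset$ is excluded by the same reasoning applied to $v_0$: either $v_0$ could be inserted directly into $I_j$ (ruled out by maximality), or the fundamental circuit $C(v_0, I_j)$ puts at least one element of $I_j$ into $R$. Setting $r := \rnk_{\mathcal{M}}(R)$, we obtain $|R| = 1 + \rho \cdot r > \rho \cdot \rnk_{\mathcal{M}}(R)$, contradicting the hypothesis.

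The main obstacle is verifying that the simultaneous swaps along a shortest augmenting path actually yield valid independent sets at each step. This is the standard technical core of the matroid partition / intersection augmenting-path proofs: minimality of the chosen path ensures that arcs $v_i \to v_{i+1}$ associated to a given color are ``non-crossing'', so that invoking the strong base exchange property of $\mathcal{M}$ class by class guarantees each modified $I_j'$ remains independent after all swaps are performed.
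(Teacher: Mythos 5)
The paper never proves this statement: it is Edmonds' matroid partition theorem, imported as Theorem~1 of~\cite{edmonds1965minimumpartition}, so you are reproving a cited classical result rather than paralleling an argument in the paper. Your necessity direction is correct and complete. Your analysis of the case where no augmentation is possible is also correct and is the part that produces the dense set: for every $u \in R$ and every $j$ with $u \notin I_j$ the fundamental circuit gives $C(u,I_j)\setminus\{u\} \subseteq R \cap I_j$, hence $R \subseteq \spn_{\mathcal{M}}(R \cap I_j)$ for every $j$; each $R \cap I_j$ is nonempty because $v_0$ is not a loop; thus $\rnk_{\mathcal{M}}(R) = |R \cap I_j|$ for all $j$, and since every arc head lies in some class, $R \subseteq \{v_0\} \cup W$ and $|R| = 1 + \rho \cdot \rnk_{\mathcal{M}}(R) > \rho \cdot \rnk_{\mathcal{M}}(R)$, contradicting the hypothesis.

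The gap is exactly where you flag it, and flagging it does not fill it: the claim that performing the swaps along a shortest $v_0$--$u$ path yields $\rho$ disjoint independent sets covering $W \cup \{v_0\}$ is the nontrivial core of the whole theorem, and the one-sentence justification you give is not a proof and names the wrong tool. ``Strong base exchange'' concerns exchanging a single pair of elements between two bases; what you need is the simultaneous-exchange lemma for independent sets: if $I \in \mathcal{I}$, $x_1,\dots,x_t \notin I$, $y_1,\dots,y_t \in I$ are distinct, $y_a \in C(x_a,I)$ for every $a$, and $y_a \notin C(x_b,I)$ whenever $a < b$ (this triangular condition is what the no-shortcut property of a shortest path yields, class by class, after a small translation from ``no arc'' to ``not in the fundamental circuit''), then $\bigl(I \setminus \{y_1,\dots,y_t\}\bigr) \cup \{x_1,\dots,x_t\}$ is independent. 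That lemma has its own genuine (if short) inductive proof, and without it nothing guarantees the classes stay independent after the swaps, whether performed simultaneously or ``in order'' (intermediate configurations need not be feasible at all). As written, your sufficiency direction is a correct outline of the standard augmenting-path proof with its key lemma asserted rather than proved; to make it complete you must state and prove that exchange lemma (or cite it explicitly, e.g. from Schrijver's treatment of matroid union), after which the rest of your argument goes through.
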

    
    \begin{proposition} \label{prop:dbs-partition}
        Let $\mathcal{M} = (V, \mathcal{I})$ be a matroid. If $V'$ is a $\rho$-DBS for some positive integer $\rho$, then there exist $\rho$ independent sets $B_1, \dots, B_{\rho}$ such that $V' = B_1 \cup \dots \cup B_{\rho}$. Conversely, if a set $V'$ of density $\rho$ can be partitioned into $\rho$ independent sets $B_1, \dots, B_{\rho}$, then $V'$ is a $\rho$-DBS.
    \end{proposition}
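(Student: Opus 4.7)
The plan is to derive both directions from Edmonds' partition theorem (Theorem~\ref{thm:edm-partition}), applied to the restriction $\mathcal{M}|V'$. Note that $\rnk_{\mathcal{M}|V'}(A) = \rnk_{\mathcal{M}}(A)$ for every $A \subseteq V'$, so the density condition in $\mathcal{M}$ translates directly into the hypothesis of Edmonds' theorem inside $\mathcal{M}|V'$.

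For the forward direction, assume $V'$ is a $\rho$-DBS. The DBS condition says $|U| \leq \rho \cdot \rnk_{\mathcal{M}}(U)$ for every $U \subseteq V'$, which is exactly the hypothesis of Theorem~\ref{thm:edm-partition} for the matroid $\mathcal{M}|V'$. Hence $V'$ can be partitioned into at most $\rho$ sets that are independent in $\mathcal{M}|V'$, and therefore independent in $\mathcal{M}$. If fewer than $\rho$ sets suffice, I pad the collection with empty sets to get exactly $\rho$ independent sets whose union is $V'$. As a sanity check, because $\rho_{\mathcal{M}}(V') = \rho$ we have $|V'| = \rho \cdot \rnk_{\mathcal{M}}(V')$, and combining this with the obvious bound $|B_i| \leq \rnk_{\mathcal{M}}(V')$ forces every $B_i$ to be a basis of $\mathcal{M}|V'$ in the non-padded case — a fact that may be useful later but is not needed for the statement itself.

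For the converse, suppose $V'$ has density $\rho$ and admits a partition $V' = B_1 \cup \dots \cup B_\rho$ into independent sets. I need to show $\rho_{\mathcal{M}}(U) \leq \rho$ for every $U \subseteq V'$. Writing $U = \bigsqcup_i (U \cap B_i)$, each $U \cap B_i$ is independent (as a subset of $B_i$) and contained in $U$, so $|U \cap B_i| \leq \rnk_{\mathcal{M}}(U)$. Summing over $i$ yields $|U| \leq \rho \cdot \rnk_{\mathcal{M}}(U)$, and combined with $\rho_{\mathcal{M}}(V') = \rho$ this shows $V'$ is a $\rho$-DBS.

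Neither direction looks hard once Edmonds' theorem is on the table; the only minor bookkeeping is matching the ``as few as $\rho$'' phrasing of Theorem~\ref{thm:edm-partition} with the ``exactly $\rho$'' required by the statement, which is resolved by padding with empty independent sets.
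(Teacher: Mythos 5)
Your proof is correct and follows essentially the same route as the paper: both directions rest on Edmonds' partition theorem (Theorem~\ref{thm:edm-partition}) applied to the restriction $\mathcal{M}|V'$. The only difference is cosmetic: for the converse, the paper simply cites the other direction of Edmonds' theorem, whereas you reprove that (easy) direction directly via the counting argument $|U| = \sum_i |U \cap B_i| \leq \rho \cdot \rnk_{\mathcal{M}}(U)$, which is fine and equally valid.
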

    
    \begin{proof}
        Consider the matroid $\mathcal{M}|V'$, \emph{i.e.}, the restriction of $\mathcal{M}$ to $V'$. By Theorem~\ref{thm:edm-partition}, as the density of any set $A \subseteq V'$ is bounded by $\rho$, $V'$ can be partitioned into $\rho$ independent sets $B_1, \dots, B_{\rho}$. 
        
        Conversely, if $V'$ can be partitioned into $\rho$ independent sets $B_1, \dots, B_{\rho}$, then by Theorem~\ref{thm:edm-partition} the density of any set $A \subseteq V'$ is bounded by $\rho$. As we assumed $\rho_{\mathcal{M}}(V') = \rho$, $V'$ is a $\rho$-DBS.
    \end{proof}

    In DBSes, the notion of uniform random sampling can be properly extended.
    
    \propsampling*
    
    \begin{proof}
        By Proposition~\ref{prop:dbs-partition}, we can write $V' = B_1 \cup \dots \cup B_{\rho}$ for some disjoint independent sets.
        Hence we have (denoting $\mathbbm{1}_{U}$ the indicator vector of the set $U$):\[\frac{1}{\rho} \mathbbm{1}_{V'} = \frac{1}{\rho}\sum_{i = 1}^{\rho} \mathbbm{1}_{B_i} \in P(\mathcal{M}),\]
        where \[P(\mathcal{M}) = \textrm{conv}\{\mathbbm{1}_{S} : S \in \mathcal{I}\} = \{x \in [0,1]^{V} : \forall\,S \subseteq V, \sum_{v \in S}x_v \leq \rnk_{\mathcal{M}}(S)\}\]
        denotes the matroid polytope of $\mathcal{M}$. Therefore we can apply the randomized rounding algorithm developed in~\cite{ChekuriVZ10} to the vector $\frac{1}{\rho} \mathbbm{1}_{V'}$ to get an integral vector $X = \mathbbm{1}_S \in \{0,1\}^V$ such that $S \subseteq V'$ is an independent set and
        \begin{enumerate}[(i)]
            \item for all $v \in V'$, $\mathbb{P}[v \in S] = \mathbb{E}[X_v] = 1/\rho$;
            \item for all $T \subseteq V'$, $\mathbb{P}[T \subseteq S] = \mathbb{E}[\prod_{v \in T}X_v] \leq \prod_{v \in X} \mathbb{E}[X_v] = \prod_{v \in T} \mathbb{P}[v \in S] = 1/\rho^{|T|}$.
        \end{enumerate}
        This concludes the proof.
    \end{proof}

    Conversely we also have the following proposition.
    \begin{proposition}
        Let $V'$ a set of integer density $\rho$ in which independent sets can be sampled randomly so that each element has probability $1/\rho$ of being sampled, then $V'$ is a $\rho$-DBS.
    \end{proposition}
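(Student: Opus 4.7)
The plan is to exploit the random sampling procedure directly. Let $S$ be the random independent set produced by the sampling, so that $S \subseteq V'$, $S \in \mathcal{I}$, and $\mathbb{P}[v \in S] = 1/\rho$ for every $v \in V'$. To verify that $V'$ is a $\rho$-DBS, it suffices (given that we already assume $\rho_{\mathcal{M}}(V') = \rho$) to show that $\rho_{\mathcal{M}}(U) \leq \rho$ for every $U \subseteq V'$.

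Fix such a $U$. The first key step is to write $|S \cap U| = \sum_{v \in U} \mathbbm{1}[v \in S]$ and take expectations, which by linearity and the given marginal probabilities gives $\mathbb{E}[|S \cap U|] = |U|/\rho$. The second key step is the deterministic upper bound $|S \cap U| \leq \rnk_{\mathcal{M}}(U)$: since $S$ is independent and $S \cap U \subseteq U$, the set $S \cap U$ is an independent subset of $U$, so its size is bounded by the rank of $U$. Taking expectations of this inequality preserves it, yielding $|U|/\rho \leq \rnk_{\mathcal{M}}(U)$, i.e., $\rho_{\mathcal{M}}(U) \leq \rho$.

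Combined with the hypothesis $\rho_{\mathcal{M}}(V') = \rho$, this is exactly the definition of a $\rho$-DBS (Definition~\ref{def:bounded-density}). There is no real obstacle here — the only mild subtlety is that we do not need to use the negative correlation property at all, nor any maximality of $S$: merely being able to randomly sample \emph{some} independent subset with the correct marginal probabilities is enough, because the rank bound is already a deterministic inequality on every realization of $S$.
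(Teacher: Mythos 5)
Your proof is correct and is essentially the paper's argument: both combine $\mathbb{E}[|S \cap U|] = |U|/\rho$ (linearity of expectation from the uniform marginals) with the deterministic bound $|S \cap U| \leq \rnk_{\mathcal{M}}(U)$, the paper merely stating it as a contradiction while you argue directly. Your added remark that negative correlation and maximality of $S$ are not needed is accurate and consistent with the paper's proof.
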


    \begin{proof}
        In fact, if there exists $U \subseteq V'$ such that $\rho_{\mathcal{M}}(U) > \rho$, then an algorithm sampling each element in $V'$ with probability $1/\rho$ would take in expectation strictly more than $\rnk_{\mathcal{M}}(U)$ elements in $U$, meaning that some of those sampled set violate the rank constraint on $U$.
    \end{proof}
    
    Another useful property of DBSes is that after a matroid contraction we can still recover in a $\rho$-DBS $V'$ a smaller one $V''$ while preserving the rank of $V'$ in the contracted matroid:
    
    \begin{proposition} \label{prop:dbs-contraction}
        Let $\mathcal{M} = (V, \mathcal{I})$ be a matroid, and $V'$ be a $\rho$-DBS in $\mathcal{M}$ for some positive integer $\rho$. Let $A \subseteq V$ such that $V' \not\subseteq \spn_{\mathcal{M}}(A)$. Then there exists a subset $V'' \subseteq V'$ such that:
        \begin{enumerate}[(i)]
            \item $V''$ is a $\rho$-DBS in $\mathcal{M}/A$;
            \item $\rnk_{\mathcal{M}/A}(V'') = \rnk_{\mathcal{M}/A}(V' \backslash A)$.
        \end{enumerate}
    \end{proposition}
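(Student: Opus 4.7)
My plan is to use the partition of $V'$ from Proposition~\ref{prop:dbs-partition} and ``project'' it into $\mathcal{M}/A$, then verify via the converse direction of the same proposition that the resulting set is a $\rho$-DBS. Concretely, I would first apply Proposition~\ref{prop:dbs-partition} to write $V' = B_1 \cup \dots \cup B_\rho$ as a disjoint union of independent sets in $\mathcal{M}$; the density condition $|V'| = \rho \cdot \rnk_\mathcal{M}(V')$ forces each $|B_i|$ to equal $\rnk_\mathcal{M}(V')$, so each $B_i$ is in fact a basis of $V'$, and in particular $V' \subseteq \spn_\mathcal{M}(B_i)$. Then for each $i$, I would let $\tilde B_i \subseteq B_i \setminus A$ be a maximum independent subset in $\mathcal{M}/A$, and define $V'' := \tilde B_1 \cup \dots \cup \tilde B_\rho$ (a disjoint union since the $B_i$ are disjoint).

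The crucial step is to show that $|\tilde B_i| = k'$ for every $i$, where $k' := \rnk_{\mathcal{M}/A}(V' \setminus A)$. By Proposition~\ref{prop:rank-contraction} applied to $A \subseteq B_i \cup A$, we have $|\tilde B_i| = \rnk_{\mathcal{M}/A}(B_i \setminus A) = \rnk_\mathcal{M}(B_i \cup A) - \rnk_\mathcal{M}(A)$. Since $V' \subseteq \spn_\mathcal{M}(B_i)$, the set $B_i \cup A$ spans $V' \cup A$ in $\mathcal{M}$, so $\rnk_\mathcal{M}(B_i \cup A) = \rnk_\mathcal{M}(V' \cup A)$; another use of Proposition~\ref{prop:rank-contraction} (now applied to $A \subseteq V' \cup A$) then yields $|\tilde B_i| = k'$. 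The hypothesis $V' \not\subseteq \spn_\mathcal{M}(A)$ ensures $k' \geq 1$, so $V''$ is non-empty.

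Once this is established, both conclusions follow: $|V''| = \rho k'$ by summing sizes, and $\rnk_{\mathcal{M}/A}(V'') = k'$ from the upper bound $V'' \subseteq V' \setminus A$ together with the lower bound provided by any single $\tilde B_i$ being independent of size $k'$, which proves (ii). Then $V''$ has density exactly $\rho$ in $\mathcal{M}/A$ and admits a partition into $\rho$ independent sets (the $\tilde B_i$'s), so the converse direction of Proposition~\ref{prop:dbs-partition} gives that $V''$ is a $\rho$-DBS in $\mathcal{M}/A$, which is (i). The main obstacle is the uniformity of the sizes $|\tilde B_i|$, which relies crucially on each $B_i$ being a basis of $V'$ (not merely independent); without this uniformity, the projected pieces could have differing sizes and the density of $V''$ would fail to be exactly the integer $\rho$.
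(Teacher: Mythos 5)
Your proof is correct and follows essentially the same route as the paper's: partition $V'$ into $\rho$ bases of $V'$ via Proposition~\ref{prop:dbs-partition}, extract from each a maximum independent subset of $\mathcal{M}/A$ of size $\rnk_{\mathcal{M}/A}(V'\backslash A)$, and conclude via the converse direction of Proposition~\ref{prop:dbs-partition}. The only (immaterial) difference is that you justify the size of each extracted piece by two applications of Proposition~\ref{prop:rank-contraction}, whereas the paper argues directly that each $B_i\backslash A$ spans $V'\backslash A$ in $\mathcal{M}/A$.
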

    
    \begin{proof}
        Using Proposition~\ref{prop:dbs-partition}, we know that $V' = B_1 \cup \dots \cup B_{\rho}$ for some disjoint independent sets $B_1, \dots, B_{\rho}$, each of cardinality $\rnk_{\mathcal{M}}(V')$ (because $V'$ has density $\rho$, so it contains $\rho \cdot \rnk_{\mathcal{M}}(V')$ elements, and each independent set is made of at most $\rnk_{\mathcal{M}}(V')$ elements), and each of them spanning $V'$ in $\mathcal{M}$. As a result, for all $i \in \{1, \dots, \rho\}$ we also have $V'  \backslash A \subseteq \spn_{\mathcal{M}/A}(B_i \backslash A)$, hence $\rnk_{\mathcal{M}/A}(B_i \backslash A) = \rnk_{\mathcal{M}/A}(V' \backslash A)$ and thereby there exists $B'_i \subseteq B_i \backslash A$ such that $B'_i$ is independent in $\mathcal{M}/A$ and $|B'_i| = \rnk_{\mathcal{M}/A}(V'\backslash A)$. Hence $V'' = B'_1 \cup \dots \cup B'_{\rho}$ is a set of rank equal to $\rnk_{\mathcal{M}/A}(V' \backslash A)$ in $\mathcal{M}/A$, contains $\rho \cdot \rnk_{\mathcal{M}/A}(V' \backslash A)$ elements, and can be partitioned into $\rho$ independent sets, therefore, by Proposition~\ref{prop:dbs-partition}, $V''$ is a $\rho$-DBS in $\mathcal{M}/A$ with $\rnk_{\mathcal{M}/A}(V'') = |B'_1| = \dots = |B'_{\rho}| = \rnk_{\mathcal{M}/A}(V' \backslash A)$.
    \end{proof}

    This contraction property of DBSes will be useful for our result in Section~\ref{sec:kernel}, and is the property of DBSes that requires the rank $\rho$ to be an integer. Moreover, this property also allow to design algorithms picking sequentially elements as demonstrated in Appendix~\ref{sec:random-seq}. Interestingly, that sampling technique specific to DBSes is very different perspective from the randomized rounding techniques described~\cite{ChekuriVZ10}.

    The next proposition states how the density is changed after a matroid is contracted.
    \begin{proposition}
        \label{prop:subset_increased_rho}
        Let $\mathcal{M} = (V, \mathcal{I})$ be a matroid. If $A \subseteq B \subseteq V$ and $U \subseteq V \backslash B$ we have the following inequality:
        \[\rho_{\mathcal{M}/A}(U) \leq \rho_{\mathcal{M}/B}(U),\]
        assuming that $\rho_{\mathcal{M}/A}(U) < +\infty$.
    \end{proposition}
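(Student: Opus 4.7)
The plan is to reduce the density inequality to a rank inequality and then invoke submodularity of the matroid rank function. Since $U \subseteq V \setminus B$ and $A \subseteq B$, the set $U$ is disjoint from both $A$ and $B$, so $U$ sits inside the ground sets of both $\mathcal{M}/A$ and $\mathcal{M}/B$, and the two densities share the common numerator $|U|$. Thus the claim reduces to showing $\rnk_{\mathcal{M}/B}(U) \leq \rnk_{\mathcal{M}/A}(U)$.

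First I would apply Proposition~\ref{prop:rank-contraction} to rewrite
\[\rnk_{\mathcal{M}/A}(U) = \rnk_{\mathcal{M}}(U \cup A) - \rnk_{\mathcal{M}}(A), \qquad \rnk_{\mathcal{M}/B}(U) = \rnk_{\mathcal{M}}(U \cup B) - \rnk_{\mathcal{M}}(B).\]
Then I would apply submodularity of $\rnk_{\mathcal{M}}$ to the pair of sets $U \cup A$ and $B$: because $A \subseteq B$ and $U \cap B = \emptyset$, their union is $U \cup B$ and their intersection is exactly $A$. Submodularity therefore yields $\rnk_{\mathcal{M}}(U \cup A) + \rnk_{\mathcal{M}}(B) \geq \rnk_{\mathcal{M}}(U \cup B) + \rnk_{\mathcal{M}}(A)$, and rearranging gives precisely the desired rank inequality.

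Finally, I would dispose of the edge cases. The hypothesis $\rho_{\mathcal{M}/A}(U) < +\infty$ rules out the situation where $U$ is non-empty while $\rnk_{\mathcal{M}/A}(U) = 0$. If $U = \emptyset$ both densities are $0$; otherwise $\rnk_{\mathcal{M}/A}(U) \geq 1$, and dividing $|U|$ by the two ranks (using the convention that a rank of $0$ on a non-empty set gives density $+\infty$, which only strengthens the inequality on the right) produces $\rho_{\mathcal{M}/A}(U) \leq \rho_{\mathcal{M}/B}(U)$. There is no real obstacle in this proof; the only point requiring care is tracking the disjointness $U \cap B = \emptyset$, which is needed both to legitimize the use of Proposition~\ref{prop:rank-contraction} and to identify $(U \cup A) \cap B$ with $A$ in the submodularity step.
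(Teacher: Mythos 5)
Your proof is correct and follows essentially the same route as the paper: both reduce the claim to the rank inequality $\rnk_{\mathcal{M}/A}(U) \geq \rnk_{\mathcal{M}/B}(U)$, with the numerator $|U|$ unchanged. The only difference is that the paper treats this rank inequality as immediate, whereas you justify it explicitly via Proposition~\ref{prop:rank-contraction} and submodularity of the rank function, which is a valid (and slightly more detailed) filling-in of that step.
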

    
    \begin{proof}
        In fact, $\rnk_{\mathcal{M}/A}(U) \geq \rnk_{\mathcal{M}/B}(U)$, while the cardinality $|U|$ remains obviously the same.
    \end{proof}
    
    Now we give some results regarding densest subsets, which are closely related to \emph{Density-Balanced Subsets}, as a densest subset in a matroid is automatically a DBS.
    
    \begin{proposition} \label{prop:add-one-densest}
        Let $\mathcal{M} = (V, \mathcal{I})$ be a matroid, and $\rho$ be a positive integer. Let $V' \subseteq V$. If $\max_{U \subseteq V'}\rho_{\mathcal{M}}(U) < \rho$, then for any $v \in V \backslash V'$, $\max_{U \subseteq V' \cup \{v\}}\rho_{\mathcal{M}}(U) \leq \rho$.
    \end{proposition}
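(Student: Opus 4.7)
The plan is to split on whether or not the new element $v$ belongs to the subset $U \subseteq V' \cup \{v\}$ whose density we are bounding. If $v \notin U$, then $U \subseteq V'$ and $\rho_{\mathcal{M}}(U) < \rho$ by hypothesis, so we are done immediately. The work is in the case $v \in U$, where I would write $U = U' \cup \{v\}$ with $U' = U \setminus \{v\} \subseteq V'$ and compare $\rho_{\mathcal{M}}(U)$ to $\rho_{\mathcal{M}}(U')$.

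For the substantive case, first dispatch the degenerate subcase $U' = \emptyset$: then $U = \{v\}$, and since $v$ is not a loop (loops in $V' \cup \{v\}$ would create subsets of infinite density, violating the hypothesis or the conclusion trivially), one has $\rho_{\mathcal{M}}(U) = 1 \leq \rho$. In the main subcase $U' \neq \emptyset$, the hypothesis gives $\rho_{\mathcal{M}}(U') = |U'|/\rnk_{\mathcal{M}}(U') < \rho$, and this is where I would invoke the key integrality argument: since $|U'|$, $\rnk_{\mathcal{M}}(U')$, and $\rho$ are all positive integers, the strict inequality $|U'| < \rho \cdot \rnk_{\mathcal{M}}(U')$ strengthens to
\[
|U'| \leq \rho \cdot \rnk_{\mathcal{M}}(U') - 1.
\]

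Adding one element then yields $|U| = |U'| + 1 \leq \rho \cdot \rnk_{\mathcal{M}}(U')$. Combining with the monotonicity of the matroid rank, namely $\rnk_{\mathcal{M}}(U) \geq \rnk_{\mathcal{M}}(U')$, gives
\[
\rho_{\mathcal{M}}(U) = \frac{|U|}{\rnk_{\mathcal{M}}(U)} \leq \frac{\rho \cdot \rnk_{\mathcal{M}}(U')}{\rnk_{\mathcal{M}}(U)} \leq \rho,
\]
which is the desired bound.

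I do not expect a serious obstacle here; the only subtlety, and the conceptual point of the proposition, is that the assumed strict inequality $\rho_{\mathcal{M}}(U') < \rho$ buys exactly one extra element of slack because $\rho$ is an integer. This is precisely why Definition~\ref{def:bounded-density} requires integrality of the density threshold, and it is the single ingredient that would fail for a real-valued bound. Modulo the implicit assumption that $V' \cup \{v\}$ contains no loops, the argument is complete.
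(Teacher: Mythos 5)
Your proof is correct and follows essentially the same route as the paper's: the integrality step $|U'| \leq \rho \cdot \rnk_{\mathcal{M}}(U') - 1$, adding a single element, and rank monotonicity are exactly the paper's argument, with your case split on $v \in U$ just making explicit what the paper leaves implicit. The loop caveat you flag for the subset $\{v\}$ is likewise left unaddressed in the paper's own proof (in the application $v$ is never a loop, being otherwise spanned by $V'$), so it is not a point of divergence.
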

    
    \begin{proof}
        Consider $U \subseteq V'$, $U \neq \emptyset$. As $\rho_{\mathcal{M}}(U) < \rho$, we know that $|U| \leq \rho \cdot \rnk_{\mathcal{M}}(U) - 1$ (because $\rho \cdot \rnk_{\mathcal{M}}(U)$ is an integer). Therefore we have
        \[\rho_{\mathcal{M}}(U \cup \{v\}) \leq \frac{|U| + 1}{\rnk_{\mathcal{M}}(U)} \leq \frac{\rho \cdot \rnk_{\mathcal{M}}(U) - 1 + 1}{\rnk_{\mathcal{M}}(U)} = \rho,\]
        which concludes the proof.
    \end{proof}
    
    \begin{proposition} \label{prop:largerdensity}
	    Let $\mathcal{M} = (V, \mathcal{I})$ be a matroid, $V'$ be a subset of $V$, and let $B$ be a subset that reaches the maximum density $\rho^* < +\infty$ in $V'$. Then given any $A \subsetneq B$, $\rho_{\mathcal{M}/ A}(B \backslash A) \geq \rho^{*}$. %When $\rho_{\mathcal{M}}(A) = \rho^*$, we also have $\rho_{\mathcal{M}/A}(B) = \rho^*$.
	\end{proposition}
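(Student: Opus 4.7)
The plan is to compute $\rho_{\mathcal{M}/A}(B\setminus A)$ directly using Proposition~\ref{prop:rank-contraction}, and then compare it with $\rho^{*}=|B|/\rnk_{\mathcal{M}}(B)$ by a one-line algebraic manipulation that exploits the fact that $A\subseteq V'$ so $\rho_{\mathcal{M}}(A)\le \rho^{*}$.

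More concretely, I would first dispose of degenerate cases. Since $\rho^{*}<+\infty$ and $B$ attains the maximum density in $V'$, no non-empty subset of $V'$ has rank $0$; in particular $\rnk_{\mathcal{M}}(A)\ge 1$ whenever $A\neq\emptyset$, and the case $A=\emptyset$ is trivial ($\mathcal{M}/\emptyset=\mathcal{M}$, so equality holds). If $\rnk_{\mathcal{M}}(A)=\rnk_{\mathcal{M}}(B)$, then Proposition~\ref{prop:rank-contraction} gives $\rnk_{\mathcal{M}/A}(B\setminus A)=0$ while $B\setminus A\neq\emptyset$ (because $A\subsetneq B$), so by the definition of density $\rho_{\mathcal{M}/A}(B\setminus A)=+\infty\ge \rho^{*}$ and we are done.

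In the remaining case, $0<\rnk_{\mathcal{M}}(A)<\rnk_{\mathcal{M}}(B)$. Applying Proposition~\ref{prop:rank-contraction} yields
\[
\rho_{\mathcal{M}/A}(B\setminus A)=\frac{|B|-|A|}{\rnk_{\mathcal{M}}(B)-\rnk_{\mathcal{M}}(A)}.
\]
Because $A\subseteq V'$ and $B$ achieves the maximum density in $V'$, we have $|A|/\rnk_{\mathcal{M}}(A)\le |B|/\rnk_{\mathcal{M}}(B)=\rho^{*}$, i.e.\ $|A|\cdot\rnk_{\mathcal{M}}(B)\le |B|\cdot\rnk_{\mathcal{M}}(A)$. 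The inequality we want, $\frac{|B|-|A|}{\rnk_{\mathcal{M}}(B)-\rnk_{\mathcal{M}}(A)}\ge \frac{|B|}{\rnk_{\mathcal{M}}(B)}$, rearranges (since the denominator on the left is strictly positive) to exactly this inequality, so we are done.

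There is no real obstacle here; the only subtlety is making sure the edge cases (empty $A$, or $A$ spanning $B$) are handled so that the ``mediant'' style inequality is applied only when both ratios are finite. The core content is the trivial mediant fact ``if $a/b \le c/d$ then $(c-a)/(d-b) \ge c/d$'', instantiated with $a=|A|$, $b=\rnk_{\mathcal{M}}(A)$, $c=|B|$, $d=\rnk_{\mathcal{M}}(B)$.
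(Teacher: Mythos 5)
Your proof is correct and follows essentially the same route as the paper: both reduce to Proposition~\ref{prop:rank-contraction} plus the fact $\rho_{\mathcal{M}}(A)\leq\rho^*$, with your explicit mediant rearrangement being exactly the paper's ``$\rho_{\mathcal{M}}(B)$ is a weighted average of $\rho_{\mathcal{M}}(A)$ and $\rho_{\mathcal{M}/A}(B\backslash A)$'' step, and both dispose of the degenerate case $\rnk_{\mathcal{M}/A}(B\backslash A)=0$ via infinite density.
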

	
	\begin{proof}
	    If $\rnk_{\mathcal{M}/A}(B \backslash A) = 0$ then $\rho_{\mathcal{M}/A}(B \backslash A) = +\infty$ and we are done; otherwise, by  Proposition~\ref{prop:rank-contraction}:
	    \[\rho_{\mathcal{M}}(B) = \frac{\rnk_{\mathcal{M}}(A) \cdot \rho_{\mathcal{M}}(A) + \rnk_{\mathcal{M}/A}(B \backslash A) \cdot \rho_{\mathcal{M}/A}(B \backslash A)}{\rnk_{\mathcal{M}}(A) + \rnk_{\mathcal{M}/A}(B \backslash A)},\]
	    hence $\rho_{\mathcal{M}}(B)$ is a weighted average of $\rho_{\mathcal{M}}(A)$ and $\rho_{\mathcal{M}/A}(B \backslash A)$. As $\rho_{\mathcal{M}}(A) \leq \rho^*$ (by definition of $\rho^*$), it implies that $\rho_{\mathcal{M}/A}(B \backslash A) \geq \rho^*$. %Moreover, if $\rho_{\mathcal{M}}(A) = \rho^*$, this averaging argument implies that $\rho_{\mathcal{M}/A}(B \backslash A) = \rho^*$.
	\end{proof}
    
    The following proposition states that the densest subsets are closed under union, thus proving the uniqueness of the maximum cardinality densest subset --- that property will be useful in Algorithm~\ref{algo:kernel-construction} (Section~\ref{sec:kernel}).
	
	\begin{proposition} \label{prop:unique-densest}
        Let $\mathcal{M} = (V, \mathcal{I})$ be a matroid, and $V' \subseteq V$. Let $\rho^* = \max_{U \subseteq V'}\rho_{\mathcal{M}}(U) < +\infty$. Then given any two subsets $W_1$, $W_2$ in $V'$ of density $\rho^*$, $\rho_{\mathcal{M}}(W_1 \cup W_2) = \rho^*$.
    \end{proposition}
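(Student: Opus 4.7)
The plan is to exploit the submodularity of the matroid rank function together with the additivity of cardinality under union and intersection, in a way reminiscent of how one proves that maximum-density subgraphs are closed under union.

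First I would write out the two identities/inequalities that govern the situation. Since $W_1$ and $W_2$ both have density exactly $\rho^*$, we have $|W_1| = \rho^* \cdot \rnk_{\mathcal{M}}(W_1)$ and $|W_2| = \rho^* \cdot \rnk_{\mathcal{M}}(W_2)$, so
\[
|W_1 \cup W_2| + |W_1 \cap W_2| \;=\; |W_1| + |W_2| \;=\; \rho^{*}\bigl(\rnk_{\mathcal{M}}(W_1) + \rnk_{\mathcal{M}}(W_2)\bigr).
\]
Next I invoke submodularity of $\rnk_{\mathcal{M}}$ to get $\rnk_{\mathcal{M}}(W_1 \cup W_2) + \rnk_{\mathcal{M}}(W_1 \cap W_2) \leq \rnk_{\mathcal{M}}(W_1) + \rnk_{\mathcal{M}}(W_2)$.

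The second step is to use the maximality of $\rho^*$ to bound each cardinality by its rank times $\rho^*$. For $W_1 \cup W_2 \subseteq V'$ this is immediate: $|W_1 \cup W_2| \leq \rho^{*} \cdot \rnk_{\mathcal{M}}(W_1 \cup W_2)$. For $W_1 \cap W_2$, if the intersection is nonempty then by the hypothesis $\rho^{*} < +\infty$ it cannot have rank $0$ (otherwise its density would be $+\infty$), so again $|W_1 \cap W_2| \leq \rho^{*} \cdot \rnk_{\mathcal{M}}(W_1 \cap W_2)$; and if the intersection is empty the inequality is trivially $0 \leq 0$. Adding these two bounds and combining with the submodular inequality gives
\[
|W_1 \cup W_2| + |W_1 \cap W_2| \;\leq\; \rho^{*}\bigl(\rnk_{\mathcal{M}}(W_1 \cup W_2) + \rnk_{\mathcal{M}}(W_1 \cap W_2)\bigr) \;\leq\; \rho^{*}\bigl(\rnk_{\mathcal{M}}(W_1) + \rnk_{\mathcal{M}}(W_2)\bigr).
\]

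Finally, comparing with the identity from the first step shows that all inequalities in this chain must be equalities. In particular, $|W_1 \cup W_2| = \rho^{*} \cdot \rnk_{\mathcal{M}}(W_1 \cup W_2)$, which is exactly $\rho_{\mathcal{M}}(W_1 \cup W_2) = \rho^*$. The only subtlety worth checking is the corner case where $W_1 \cap W_2 = \emptyset$, handled above; everything else is a direct squeeze argument, so I do not expect any genuine obstacle.
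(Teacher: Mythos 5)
Your proof is correct, but it takes a genuinely different route from the paper. You work directly with the submodularity of the rank function combined with the modularity of cardinality ($|W_1\cup W_2|+|W_1\cap W_2|=|W_1|+|W_2|$), bound both $|W_1\cup W_2|$ and $|W_1\cap W_2|$ by $\rho^*$ times their ranks using the maximality of $\rho^*$ (with the correct observation that a nonempty rank-$0$ intersection is impossible when $\rho^*<+\infty$), and close the squeeze so that each inequality is forced to be tight; the only unstated detail is that $\rnk_{\mathcal{M}}(W_1\cup W_2)>0$ (immediate, since $W_1$ is nonempty of finite density), so dividing to get $\rho_{\mathcal{M}}(W_1\cup W_2)=\rho^*$ is legitimate. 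The paper instead dismisses the case $W_1\subseteq W_2$, applies Proposition~\ref{prop:largerdensity} to get $\rho^*\leq\rho_{\mathcal{M}/(W_1\cap W_2)}(W_1\setminus(W_1\cap W_2))$, upgrades this via Proposition~\ref{prop:subset_increased_rho} to a density in $\mathcal{M}/W_2$, and concludes by the weighted-average identity that $\rho_{\mathcal{M}}(W_1\cup W_2)\geq\rho^*$. Your argument is more self-contained and elementary (it needs no contraction machinery), and as a bonus it also shows that a nonempty $W_1\cap W_2$ attains density $\rho^*$, i.e., densest sets are closed under intersection as well; the paper's route, by contrast, reuses its density-under-contraction toolkit, which it has already developed and exploits again elsewhere (e.g., in Proposition~\ref{prop:non-increasing} and Claim~\ref{claim:line-ineg-kernel}), so both proofs are natural in their respective contexts.
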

    
    \begin{proof}
        If $W_1 \subseteq W_2$, then the proposition is trivially true. So assume that $W_1 \backslash W_2 \neq \emptyset$, and we can
        observe that
        \[\rho^* \leq  \rho_{\mathcal{M}/(W_1 \cap W_2)}(W_1 \backslash (W_1 \cap W_2)) \leq \rho_{\mathcal{M}/W_2}(W_1 \backslash (W_1 \cap W_2)),\]
        where the first inequality uses Proposition~\ref{prop:largerdensity} and the second uses Proposition~\ref{prop:subset_increased_rho}. 
        As a result, by the facts that $\rho_{\mathcal{M}}(W_2) = \rho^*$ and that  $\rho_{\mathcal{M}/W_2}(W_1 \backslash (W_1 \cap W_2)) \geq \rho^*$, we obtain $\rho_{\mathcal{M}}(W_1 \cup W_2)\geq \rho^*$ (using a weighted average argument as in Proposition~\ref{prop:largerdensity}).
        Hence we have $\rho_{\mathcal{M}}(W_1 \cup W_2) = \rho^*$.
    \end{proof}
    
    \begin{proposition} \label{prop:non-increasing}
	     Let $\mathcal{M} = (V, \mathcal{I})$ be a matroid, and $V' \subseteq V$ a non-empty set. Let $A$ be the largest densest subset in $V'$. Then for any $B \subseteq V' \backslash A$, we have $\rho_{\mathcal{M}/A}(B) < \rho_{\mathcal{M}}(A)$.
	\end{proposition}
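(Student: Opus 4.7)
The plan is to argue by contradiction: suppose that some $B \subseteq V'\setminus A$ satisfies $\rho_{\mathcal{M}/A}(B) \geq \rho^* := \rho_{\mathcal{M}}(A)$, and show that $A\cup B$ would then be a densest subset of $V'$ strictly larger than $A$, contradicting the maximality of $A$. The case $B=\emptyset$ is disposed of immediately, since $\rho_{\mathcal{M}/A}(\emptyset) = 0$ while $\rho^{*}\geq 1$ (as $V'$ is non-empty, any singleton of $V'$ already has density at least $1$). So we may focus on non-empty $B$.

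The heart of the argument reuses the weighted-average identity that appears in Proposition~\ref{prop:largerdensity}. Since $A$ and $B$ are disjoint, $|A\cup B| = |A| + |B|$, and by Proposition~\ref{prop:rank-contraction}, $\rnk_{\mathcal{M}}(A\cup B) = \rnk_{\mathcal{M}}(A) + \rnk_{\mathcal{M}/A}(B)$. Therefore $\rho_{\mathcal{M}}(A\cup B)$ is a convex combination of $\rho_{\mathcal{M}}(A) = \rho^{*}$ and $\rho_{\mathcal{M}/A}(B) \geq \rho^{*}$, so $\rho_{\mathcal{M}}(A\cup B) \geq \rho^{*}$. Combined with the definition of $\rho^{*}$ as the maximum density in $V'$, this forces $\rho_{\mathcal{M}}(A\cup B) = \rho^{*}$. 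Since $B \neq \emptyset$ is disjoint from $A$, the set $A\cup B \supsetneq A$ is then a densest subset of $V'$ strictly larger than $A$, the desired contradiction.

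The main obstacle is the degenerate case $\rnk_{\mathcal{M}/A}(B) = 0$, where the weighted-average interpretation breaks down because $\rho_{\mathcal{M}/A}(B) = +\infty$. In that case I would argue directly: $\rnk_{\mathcal{M}}(A\cup B) = \rnk_{\mathcal{M}}(A)$ while $|A\cup B| > |A|$, so $\rho_{\mathcal{M}}(A\cup B) > \rho^{*}$, which already contradicts the definition of $\rho^{*}$ (and covers even the subcase $\rho^{*} = +\infty$, handled by noting that a largest densest subset of $V'$ must absorb any loops of $V'$). With this edge case dealt with, the rest of the proof is essentially the same weighted-average bookkeeping already developed in the two preceding propositions.
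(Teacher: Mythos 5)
Your proof is correct and follows essentially the same route as the paper: assume $\rho_{\mathcal{M}/A}(B) \geq \rho_{\mathcal{M}}(A)$, use Proposition~\ref{prop:rank-contraction} to write $\rho_{\mathcal{M}}(A \cup B)$ as a rank-weighted average of $\rho_{\mathcal{M}}(A)$ and $\rho_{\mathcal{M}/A}(B)$, and contradict the choice of $A$ as the largest densest subset. Your explicit treatment of the degenerate cases ($B = \emptyset$, $\rnk_{\mathcal{M}/A}(B) = 0$, and $\rho^{*} = +\infty$) is sound and merely spells out details the paper's shorter argument leaves implicit.
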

	
	\begin{proof}
	    We proceed by contradiction. Suppose that there exists $B \subseteq V' \backslash A$ such that $\rho_{\mathcal{M}/A}(B) \geq \rho_{\mathcal{M}}(A)$. Then it implies that \[\rho_{\mathcal{M}}(A \cup B) = \rho_{\mathcal{M}}(A) \cdot \frac{\rnk_{\mathcal{M}}(A)}{\rnk_{\mathcal{M}}(A) + \rnk_{\mathcal{M}/A}(B)} + \rho_{\mathcal{M}/A}(B) \cdot \frac{\rnk_{\mathcal{M}/A}(B)}{\rnk_{\mathcal{M}}(A) + \rnk_{\mathcal{M}/A}(B)} \geq
	    \rho_{\mathcal{M}}(A),\]
	    contradicting the hypothesis that $A$ was the largest densest set in $V'$.
	\end{proof}
    
\section{Matroid-Constrained Maximum Coverage}
    \label{sec:kernel}

    Here we will use a slightly different formalization of the coverage function (using edge-weighted hypergraphs) compared to the one provided in the introduction, but it is straightforward to see that these formalizations are actually equivalent. Let $G = (V,E)$ be a hypergraph, $V$ being a set of vertices and $E$ being a set of hyper-edges, \emph{i.e.}, an element $e \in E$ is a subset of $V$. We denote $n = |V|$. Let $w : E \rightarrow \mathbb{R}_+$ be a weight function on the hyper-edges. We extend this function to any set of hyper-edges by setting for any $A \subseteq E$, $w(A) = \sum_{e \in A}w(e)$. For a vertex $v \in V$, we denote $\delta(v)$ the set of its set of \emph{incident} hyper-edges, namely, $\delta(v) = \{e \in E : v \in e\}$, and $\deg_w(v)$ its \emph{weighted degree}, namely, $\deg_w(v) = w(\delta(v))$. The \emph{frequency} of a hyper-edge $e$ is defined as the number of vertices for which $e$ appears in $\delta(v)$, namely, $|e|$. The hypergraph $G$ will be said of \emph{bounded frequency $\mu$} if all its hyper-edges have frequencies bounded by $\mu$. For two sets of vertices $S$, $T$ we denote by $E(S, T)$ the set of hyper-edges having at least one endpoint in each set $S$ and $T$, namely, $E(S,T) = \{e \in E : e \cap S \neq \emptyset, e \cap T \neq \emptyset\}$. For conciseness we will denote $E(S) = E(S, S)$.
    
    Let $\mathcal{M} = (V, \mathcal{I})$ be a matroid on the ground set $V$. In the matroid-constrained maximum coverage problem, we are asked to find a set of vertices $S \subseteq V$ that is independent in the matroid $\mathcal{M}$ (\emph{i.e.}, $S \in \mathcal{I}$) and that maximizes the total weight of the covered hyper-edges, namely, an element of
    \[\mathop{\arg\max}_{S \in \mathcal{I}} w(E(S)).\]
    The problem can be solved exactly by the standard greedy algorithm if $\mu = 1$~\cite{Edmonds1971}, so in the following we will assume that our hypergraph has a bounded frequency of $\mu \geq 2$. Observe that the case $\mu = 2$ corresponds to the matroid-constrained maximum vertex cover, studied in~\cite{HuangS22-swat}.
    
    Here we want to construct a kernel that contains a good approximation of the optimal solution of the maximum coverage problem under a matroid constraint. We will start by describing a procedure to build the kernel that will be convenient for the analysis, and then we will show that this algorithm turn out to be equivalent to the one of Theorem~\ref{thm:ratio-union}. We build our kernel $V'$ as follows. Let $\rho$ be a fixed positive integer. Start with an empty set $V'$, and an auxiliary set $C$ that is also empty at the beginning. Processing the elements $v_i \in V$ by non-increasing weighted-degree, if the element $v_i$ is not spanned by $V'$ at that time we add that element to $C$ and we check whether $C$ contains a set of density larger or equal to $\rho$ with respect to the matroid $\mathcal{M}/V'$. If this is the case, then we consider that largest densest subset $X$ in $C$ with respect to $\mathcal{M}/V'$, we add that set into $V'$ and remove that set from $C$ (note that, because of Proposition~\ref{prop:unique-densest}, the largest densest set $X$ is well-defined). When the main loop terminates, the set $C$ is also added into $V'$. A formal description of this procedure is provided in Algorithm~\ref{algo:kernel-construction}.
    
    \begin{algorithm}
	\caption{Algorithm for building a maximum coverage approximate kernel}\label{algo:kernel-construction}
	\begin{algorithmic}[1]
	\State $V = \{v_1,\dots, v_n\}$ where $\deg_w(v_1) \geq \dots \geq \deg_w(v_n)$
	\State $V' \gets \emptyset$, $C \gets \emptyset$
	\For{$i = 1, \dots, n$} \Comment the vertices are processed in non-increasing order of weighted degree
	    \If{$v_i \in \spn_{\mathcal{M}}(V')$} \label{line:kernel-dismiss}
	        \State\textbf{continue} \Comment $v_i$ is ignored if already spanned by $V'$
	    \EndIf
	    \State $C \gets C \cup \{v_i\}$
	    \State let $X$ be the largest densest subset in $C$ with respect to the matroid $\mathcal{M}/V'$
	    \If{$\rho_{\mathcal{M}/V'}(X) \geq \rho$} \label{line:kernel-bigger}
	        \State $C \gets C \backslash X$
	        \State $V' \gets V' \cup X$ \label{line:add-densest-kernel}
	    \EndIf
	\EndFor
	\State $V' \gets V' \cup C$ \label{line:add-end-kernel}
	\State\Return $V'$
	\end{algorithmic}
	\end{algorithm}
	
	\begin{claim} \label{claim:line-ineg-kernel}
        In Algorithm~\ref{algo:kernel-construction}, at the end of each iteration of the loop, for all $U \subseteq C$, $\rho_{\mathcal{M}/V'}(U) < \rho$. Moreover, when the condition at Line~\ref{line:kernel-bigger} is true, we have $\rho_{\mathcal{M}/V'}(X) = \rho$.
    \end{claim}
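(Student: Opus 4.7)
The plan is to prove both parts of the claim simultaneously by induction on the iterations of the main loop, taking the inductive invariant to be exactly the first statement: at the end of iteration $i$, every subset of $C$ has density strictly less than $\rho$ in $\mathcal{M}/V'$. The base case (before the loop begins) is vacuous since $C = \emptyset$.

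For the inductive step, suppose the invariant holds at the end of iteration $i-1$, and consider iteration $i$. If $v_i$ is spanned by $V'$ and discarded at Line~\ref{line:kernel-dismiss}, neither $V'$ nor $C$ changes and the invariant persists trivially. Otherwise $v_i$ is added to $C$, yielding $C' = C \cup \{v_i\}$. By the induction hypothesis, every subset of $C$ had density $< \rho$ in $\mathcal{M}/V'$, so Proposition~\ref{prop:add-one-densest} (applied in the matroid $\mathcal{M}/V'$) gives $\rho_{\mathcal{M}/V'}(U) \le \rho$ for every $U \subseteq C'$. In particular, the largest densest subset $X$ of $C'$ in $\mathcal{M}/V'$ has density at most $\rho$, so the test $\rho_{\mathcal{M}/V'}(X) \ge \rho$ at Line~\ref{line:kernel-bigger} can hold only when $\rho_{\mathcal{M}/V'}(X) = \rho$, establishing the second statement of the claim.

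It then remains to verify that the invariant survives the end of iteration $i$. If the test fails, then the maximum density of any subset of $C'$ in $\mathcal{M}/V'$ is already $< \rho$, and $V'$ has not changed, so the invariant holds for $C_{\text{new}} = C'$. If the test succeeds, then $V'_{\text{new}} = V' \cup X$ and $C_{\text{new}} = C' \setminus X$. Since $X$ is the \emph{largest} densest subset of $C'$ in $\mathcal{M}/V'$ (well-defined thanks to Proposition~\ref{prop:unique-densest}) and has density exactly $\rho$, Proposition~\ref{prop:non-increasing} applied in the matroid $\mathcal{M}/V'$ with ambient subset $C'$ yields $\rho_{(\mathcal{M}/V')/X}(B) < \rho$ for every $B \subseteq C' \setminus X$. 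Using the standard associativity of contractions $(\mathcal{M}/V')/X = \mathcal{M}/(V' \cup X)$, this is exactly the invariant at the end of iteration $i$, completing the induction.

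The main subtlety is ensuring that Proposition~\ref{prop:non-increasing} really applies, which requires $X$ to be the \emph{largest} densest subset of $C'$ rather than merely some densest subset; this is precisely what the algorithm picks on Line~\ref{line:kernel-bigger}, and the existence and uniqueness of such a maximal densest subset is guaranteed by Proposition~\ref{prop:unique-densest}. Beyond that point, the argument is a direct bookkeeping exercise combining the stated monotonicity properties of density under the addition of a single element and under contraction of the densest part.
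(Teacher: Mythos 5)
Your proof is correct and follows essentially the same route as the paper: induction over the loop iterations, Proposition~\ref{prop:add-one-densest} to show the density after inserting $v_i$ cannot exceed $\rho$ (hence the test at Line~\ref{line:kernel-bigger} fires only at density exactly $\rho$), and Proposition~\ref{prop:non-increasing} to restore the strict bound on $C$ after contracting out $X$. The only difference is that you spell out details the paper leaves implicit (the spanned-and-skipped case and the identity $(\mathcal{M}/V')/X = \mathcal{M}/(V' \cup X)$), which is fine.
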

    
    \begin{proof}
        We prove these two properties by induction. In the first iteration of the algorithm, both properties are clearly true. Suppose that during the $i$th iteration both properties are satisfied. It means that at the beginning of the $(i+1)$st iteration the densest subset in $C$ is of density strictly smaller than $\rho$, hence Proposition~\ref{prop:add-one-densest} implies that the densest subset after inserting $v_{i+1}$ in $C$ cannot be of density strictly larger than $\rho$. This implies the second property for the $(i+1)$st iteration. Regarding the first property,
        \begin{itemize}
            \item if condition at Line~\ref{line:kernel-bigger} is false, then the first property is clearly satisfied;
            \item if condition at Line~\ref{line:kernel-bigger} is true, then the preceding discussion implies that the largest densest subset $X$ removed from $C$ has density $\rho$, and therefore, by Proposition~\ref{prop:non-increasing}, the first property is satisfied.
        \end{itemize}
        This concludes the proof.
    \end{proof}
    
    The set $V'$ can be decomposed as follows:
    \begin{equation} \label{eq:kernel-decomp}
        V' = X_1 \cup \dots \cup X_r \cup R
    \end{equation}
    where the $X_i$s represent the largest densest subsets $X$ that were added through the execution of the algorithm (at Line~\ref{line:add-densest-kernel}), labeled in the order they were added, and $R$ represents the set of remaining elements coming from $C$ that were added after termination of the main loop (at Line~\ref{line:add-end-kernel}). As each $X_i$ is a densest subset of density $\rho$ in $\mathcal{M}/(\bigcup_{j = 1}^{i-1}X_j)$, each set $X_i$ is a $\rho$-DBS in the matroid $\mathcal{M}/(\bigcup_{j = 1}^{i-1}X_j)$.
    
    Now we can prove the following important lemma, which states that the set $V'$ built by Algorithm~\ref{algo:kernel-construction} is an approximate kernel, \emph{i.e.}, a small subset of $V$ containing a good approximation of the optimal solution:
	
	\begin{lemma} \label{lem:ratio-kernel}
	    Let $V'$ be the kernel built in Algorithm~\ref{algo:kernel-construction}, and let $O$ be an optimal solution. Then $V'$ contains an independent set $S$ such that $w(E(S)) \geq (1 -  (\mu - 1)/\rho) \cdot w(E(O))$.
	\end{lemma}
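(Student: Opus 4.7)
The plan is to exhibit, by the probabilistic method, a random independent set $S \subseteq V'$ satisfying $\mathbb{E}[w(E(S))] \geq (1 - (\mu-1)/\rho) \cdot w(E(O))$; the existence of a deterministic such $S$ follows.

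To construct $S$, I would use the decomposition $V' = X_1 \cup \cdots \cup X_r \cup R$ from equation~\eqref{eq:kernel-decomp}, together with Claim~\ref{claim:line-ineg-kernel} which guarantees that each $X_i$ is a $\rho$-DBS in the contracted matroid $\mathcal{M}/(X_1 \cup \cdots \cup X_{i-1})$. Applying Proposition~\ref{prop:random-sampling} independently to each $X_i$ yields random bases $S_i \subseteq X_i$ of the respective contracted matroids; by a telescoping rank argument (Proposition~\ref{prop:rank-contraction}), the concatenation $S = S_1 \cup \cdots \cup S_r$ is an independent set of $\mathcal{M}$. Each $v \in \bigcup_i X_i$ satisfies $\mathbb{P}[v \in S] = 1/\rho$, and the indicators $\{\mathbb{1}[v \in S]\}_{v \in V'}$ are negatively cylinder-correlated globally (independent across distinct $X_i$'s and negatively correlated within each $X_i$).

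The key step is the probabilistic estimate $\mathbb{E}[w(E(S))] = \sum_e w(e) \cdot \mathbb{P}[e \cap S \neq \emptyset]$. By the lower-orthant form of negative correlation, for each edge $e$ of the hypergraph,
\[\mathbb{P}[e \cap S = \emptyset] \leq \prod_{v \in e \cap V'}\!\!(1 - 1/\rho) = \left(1 - 1/\rho\right)^{|e \cap V'|}.\]
To relate this to $w(E(O))$, I would charge $O \setminus V'$ to $V'$ using matroid exchange: for each $o \in O \setminus V'$, Algorithm~\ref{algo:kernel-construction} skipped $o$ at its processing step $i$ because $o \in \spn_\mathcal{M}(V'_{i-1})$, and since the vertices are processed in non-increasing weighted-degree order, every element of $V'_{i-1}$ already has weighted degree at least $\deg_w(o)$. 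This yields a distinct-representative injection $\phi\colon O \setminus V' \hookrightarrow V'$ with $\deg_w(\phi(o)) \geq \deg_w(o)$, transferring the ``responsibility'' for each missing $o$ onto a heavier element of $V'$.

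The main difficulty is the final aggregation, because per edge the bound $1 - (1 - 1/\rho)^{|e \cap V'|}$ does not in general reach the target $1 - (\mu-1)/\rho$ when $|e \cap V'|$ is small. Overcoming this requires summing over $e \in E(O)$ while using the exchange injection $\phi$ to account for edges $e \in E(O) \setminus E(V')$ (which are never covered by $S$) and carefully bounding the overcount $\sum_e w(e) \binom{|e \cap V'|}{2}/\rho^2$ by leveraging the frequency bound $|e| \leq \mu$. The ratio $1 - (\mu-1)/\rho$ should emerge precisely from the balance between the per-element inclusion probability $1/\rho$ and the maximum edge frequency $\mu$.
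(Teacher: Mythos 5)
Your proposal has the right high-level frame (probabilistic method, the decomposition $V' = X_1 \cup \dots \cup X_r \cup R$, sampling bases of the $\rho$-DBSes, and a degree-ordered exchange between $O \setminus V'$ and $V'$), but the core of the argument is missing, and the specific construction you propose cannot be pushed to the stated bound. The decisive issue is that your random set $S = S_1 \cup \dots \cup S_r$ omits $O \cap V'$ entirely. Consider the extreme case where the main loop never triggers Line~\ref{line:add-densest-kernel}, so $V' = R$ and possibly $O \subseteq V'$: your $S$ is empty and covers nothing, while the lemma demands a $(1-(\mu-1)/\rho)$-fraction of $w(E(O))$. More generally, an edge whose only $O$-endpoint lies in $O \cap V'$ is covered by your $S$ with probability at most $1-(1-1/\rho)^{|e\cap V'|}$, which can be as small as $1/\rho$ (or $0$), far below the target; you acknowledge this (``the per-edge bound does not reach the target'') but the proposed remedy --- an injection $\phi\colon O\setminus V' \hookrightarrow V'$ with larger degrees plus a careful overcount bound --- is exactly the part you leave unproved, and it is where all the work lies. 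The paper resolves this by taking $S = O^{in} \cup \tilde S$ with $O^{in} = O \cap V'$ included \emph{deterministically}; to keep $S$ independent one cannot sample from the $X_i$ themselves but must first pass, via Proposition~\ref{prop:dbs-contraction}, to sub-DBSes $X_i' \subseteq X_i$ that are $\rho$-DBSes in $\mathcal{M}/(O^{in} \cup \bigcup_{j<i}X_j)$, a step absent from your sketch and the reason that proposition exists.

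A second substantive difference: the aggregation in the paper does not go through the lower-orthant bound $\mathbb{P}[e \cap S = \emptyset] \le (1-1/\rho)^{|e\cap V'|}$ (which, incidentally, is not what Proposition~\ref{prop:random-sampling} gives you --- it states only the upper-orthant inequality $\mathbb{P}[T \subseteq S] \le 1/\rho^{|T|}$). Instead it writes $w(E(S)) = w(E(O^{in})) + w(E(\tilde S)) - w(E(O^{in},\tilde S))$, bounds the cross term by $(\mu-1)/\rho \cdot w(E(O^{in}))$ via a union bound, lower-bounds $\mathbb{E}[w(E(\tilde S))] \ge (1-(\mu-1)/\rho)\,\mathbb{E}[\sum_{v\in\tilde S}\deg_w(v)]$ using only \emph{pairwise} negative correlation, and then compares $\sum_{v\in\tilde S}\deg_w(v)$ to $w(E(O^{out}))$ by a quantitative replacement argument: the rank inequality $|\bigcup_{j\le i}O^{out}_j| \le \sum_{j\le i} k_j$ in $\mathcal{M}/O^{in}$ (Claim~\ref{claim:ineg-k}) combined with the degree ordering (Claims~\ref{claim:x-greater} and~\ref{claim:replace-greater}). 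Your injection $\phi$ is a qualitative shadow of this, but without the contraction by $O^{in}$, the rank bookkeeping, and the pairwise-correlation accounting of the overcount, the factor $1-(\mu-1)/\rho$ does not ``emerge''; as written the proof has a genuine gap at its main step.
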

	
	\begin{proof}
	    Let $O \in \mathcal{I}$ be an optimal solution. We denote $O^{in} = O \cap V'$, $O^{out} = O \backslash O^{in}$. As in~\cite{HuangS22-swat,Manurangsi19-sosa}, we want to sample randomly an independent set $S \subseteq V'$ so that we have an inequality \emph{in expectation}, implying that some set satisfying that same inequality actually exists:\[\mathbb{E}[w(E(S))] \geq (1 -  (\mu - 1)/\rho) \cdot w(E(O)).\]
	    
	    To sample $S$, we will do the following. For $i = 1, \dots, r$, using Proposition~\ref{prop:dbs-contraction}, consider $X'_i \subseteq X_i$ a $\rho$-DBS in $\mathcal{M}/(O^{in} \cup \bigcup_{j = 1}^{i-1}X_j)$ of rank $k_i = \rnk_{\mathcal{M}/(O^{in} \cup \bigcup_{j = 1}^{i-1}X_j)}(X_i)$ (if $X_i$ is spanned by $O^{in} \cup \bigcup_{j = 1}^{i-1}X_j$ in $\mathcal{M}$, then we just set $X'_i = \emptyset$ and $k_i = 0$). From that set $X_i'$, using Proposition~\ref{prop:random-sampling}, we sample an independent set $S_i = \{s_{i,1},\dots, s_{i,k_i}\} \subseteq X'_i$ in $\mathcal{M}/(O^{in} \cup \bigcup_{j = 1}^{i-1}X_j)$ such that:
	    \begin{enumerate}[(i)]
            \item for all $v \in X_i'$, $\mathbb{P}[v \in S_i] = 1/\rho$;
            \item for all $v, v' \in X_i'$ such that $v \neq v'$, $\mathbb{P}[v \in S_i \wedge v' \in S_i] \leq 1/\rho^2$.
        \end{enumerate}
        Then we set $S = O^{in} \cup \bigcup_{i = 1}^{r} S_i$. We denote $\tilde{S} = \bigcup_{i = 1}^{r} S_i$ and $V'' = \bigcup_{i = 1}^r X'_i$. The $S_i$s are sampled independently.
        
        \begin{claim} \label{claim:s-properties}
            The set $S$ sampled using the aforementioned method is always independent in $\mathcal{M}$. Moreover,
            \begin{enumerate}[(i)]
                \item for all $v \in V''$, $\mathbb{P}[v \in \tilde{S}] = 1/\rho$;
                \item for all $v, v' \in V''$ such that $v \neq v'$, $\mathbb{P}[v \in \tilde{S} \wedge v' \in \tilde{S}] \leq 1/\rho^2$.
        \end{enumerate}
        \end{claim}
        
        \begin{proof}
            We prove by induction on $i$ that $O^{in} \cup \bigcup_{j=1}^iS_j$ is independent in $\mathcal{M}$. For $i = 0$ this is clearly true, as $O^{in}$ is a subset of $O$, which is an independent set in $\mathcal{M}$. Suppose the property is true for some $i < r$. Then it means that $O^{in} \cup \bigcup_{j=1}^iS_j$ is an independent set in $\mathcal{M}$. We know that for any sampling in $X'_{i+1}$, the set $S_{i+1}$ is independent is $\mathcal{M}/(O^{in} \cup \bigcup_{j=1}^iX_j)$, so it is also independent in $\mathcal{M}/(O^{in} \cup \bigcup_{j=1}^iS_j)$ (as $O^{in} \cup \bigcup_{j=1}^iS_j \subseteq O^{in} \cup \bigcup_{j=1}^iX_j$) and therefore $O^{in} \cup \bigcup_{j=1}^{i+1}S_j$ is independent in $\mathcal{M}$.
            
            Then, for property~(i), consider some $v \in V''$. There exists a unique $i$ such that $v \in X_i$ (as the $X_i$s are disjoint). Hence from the properties of the sampling of $S_i$ we know that $\mathbb{P}[v \in S_i] = 1/\rho$, hence $\mathbb{P}[v \in \tilde{S}] = 1/\rho$. For property~(ii), if $v$ and $v'$ are in the same $X_i$, then the way $S_i$ is sampled guarantees that $\mathbb{P}[v \in \tilde{S} \wedge v' \in \tilde{S}] \leq 1/\rho^2$. Otherwise, the choices of $v$ and $v'$ are independent, \emph{i.e.}, $\mathbb{P}[v \in \tilde{S} \wedge v' \in \tilde{S}] = 1/\rho^2$.
        \end{proof}
        
        For $i = 1, \dots, r$, let $O^{out}_i = O^{out} \cap (\spn_{\mathcal{M}}(\bigcup_{j = 1}^{i}X_j) \backslash \spn_{\mathcal{M}}(\bigcup_{j = 1}^{i-1}X_j))$. 
        
        \begin{claim}
            We have $O^{out} = O^{out} \cap \spn_{\mathcal{M}}(\bigcup_{i = 1}^{r}X_i) = \bigcup_{i = 1}^{r}O^{out}_i$.
        \end{claim} 
        
        \begin{proof}
            The second part of the equality is straightforward, so we focus on the first part. Consider $v \in O \backslash \spn_{\mathcal{M}}(\bigcup_{i = 1}^{r}X_i)$. It means that when $v$ is processed in Algorithm~\ref{algo:kernel-construction}, that element cannot be discarded by the condition in Line~\ref{line:kernel-dismiss}, because at that time $V' = \bigcup_{j = 1}^{i}X_j$ for some $i$ and therefore $v$ is not spanned by $V'$: that element is thereby added to $C$. Hence $v$ is added to $V'$ in the end (Line~\ref{line:add-end-kernel}) and $v \in O^{in}$ (more precisely, $v \in O \cap R$, using the notation of equation~(\ref{eq:kernel-decomp})).
        \end{proof}
        
        \begin{claim} \label{claim:x-greater}
            For all $v \in O^{out}_i$, for all $v' \in X_j$ such that $j \leq i$, we have $\deg_w(v) \leq \deg_w(v')$.
        \end{claim}
        
        \begin{proof}
            The element $v \in O^{out}_i$ has been discarded after the sets $X_1, \dots, X_i$ were built (otherwise it would not have been spanned by $V'$, see Line~\ref{line:kernel-dismiss}), therefore these sets only contain elements having larger or equal weighted degrees.
        \end{proof}
        
        \begin{claim} \label{claim:ineg-k}
            For all $1 \leq i \leq r$, we have $|\bigcup_{j=1}^{i}O^{out}_j| \leq \sum_{j = 1}^{i}k_j$.
        \end{claim}
        
        \begin{proof}
            The set $\bigcup_{j=1}^{i}O^{out}_j$ is independent in the matroid $\mathcal{M}/O^{in}$ and as it is in $\spn_{\mathcal{M}}(\bigcup_{j = 1}^{i}X_j)$, it is in $\spn_{\mathcal{M}/O^{in}}(\bigcup_{j = 1}^{i}X_j)$. Then we have the inequality $|\bigcup_{j=1}^{i}O^{out}_j| \leq \rnk_{\mathcal{M}/O^{in}}(\bigcup_{j = 1}^{i}X_j) = \sum_{j=1}^{i}\rnk_{\mathcal{M}/(O^{in} \cup \bigcup_{l = 1}^{j-1}X_l)}(X_j) = \sum_{j = 1}^{i}k_j$, where in the first equality we used Proposition~\ref{prop:rank-contraction} multiple times.
        \end{proof}
        
        Now we index the elements in $O^{out} = \{o_1, \dots, o_{|O^{out}|}\}$ so that $O^{out}_1 = \{o_1, \dots,o_{|O^{out}_1|}\}$, $O^{out}_2 = \{o_{|O^{out}_1| + 1}, \dots ,o_{|O^{out}_1| + |O^{out}_2|}\}$, and so on. Similarly, we index the elements of $\tilde{S} = \{s_1, \dots, s_{\sum_{i=1}^rk_i}\}$ so that $s_1 = s_{1,1}, \dots, s_{k_1} = s_{1, k_1}$, $s_{k_1 + 1} = s_{2,1}, \dots, s_{k_1 + k_2} = s_{2,k_2}$, and so on.
        
        In the following, we will say that the element $s_i$ ``replaces'' the element $o_i$.
        
        \begin{claim} \label{claim:replace-greater}
            For all $1 \leq i \leq |O^{out}|$, we have $\deg_w(o_i) \leq \deg_w(s_i)$.
        \end{claim}
        
        \begin{proof}
            Because of Claim~\ref{claim:ineg-k}, an element $o_i \in O^{out}_j$ is replaced by $s_i \in X_{j'}$ for some $j' \leq j$. As a result, by Claim~\ref{claim:x-greater}, we know that we always have $\deg_w(o_i) \leq \deg_w(v)$ for any $v \in X_{j'}$. As $s_i$ is drawn from $X_{j'}$ we obtain the desired result.
        \end{proof}
    
        Then, as $S = O^{in} \cup \tilde{S}$, we have:
	    \[w(E(S)) = w(E(O^{in})) + w(E(\tilde{S})) - w(E(O^{in}, \tilde{S})).\]
	    We bound $\mathbb{E}[w(E(O^{in}, \tilde{S}))]$ as follows. By construction, $\mathbb{P}[v' \in V''] = 1/\rho$ for all $v' \in V''$. Then we have
        \begin{align*}
            \mathbb{E}[w(E(O^{in}, \tilde{S}))] &= \sum_{e \in E(O^{in})} w(e) \cdot \mathbb{P}[e \cap \tilde{S} \neq \emptyset]\\
            &\leq \sum_{e \in E(O^{in})} \left[w(e) \cdot \left(\sum_{v' \in e \cap V''}\mathbb{P}[v' \in \tilde{S}]\right)\right] &\text{by union-bound}\\
            &= \sum_{e \in E(O^{in})} w(e) \cdot |e \cap V''| \cdot 1/\rho &\text{as $\mathbb{P}[v' \in \tilde{S}] = 1/\rho$ for all $v' \in V''$}\\
            &\leq \sum_{e \in E(O^{in})} w(e) \cdot (\mu - 1) \cdot 1/\rho &\text{as $|e \cap V''| \leq \mu - 1$}\\
            &= (\mu - 1)/\rho \cdot w(E(O^{in})).
        \end{align*}
        Furthermore, the value $w(E(\tilde{S}))$ can be rearranged as follows:
        \begin{align*}
            w(E(\tilde{S})) = \sum_{e \in E(\tilde{S})} \sum_{v \in e \cap \tilde{S}} \frac{w(e)}{|e \cap \tilde{S}|} = \sum_{v \in \tilde{S}} \sum_{e \in \delta(v)} \frac{w(e)}{|e \cap \tilde{S}|} &=  \sum_{v \in \tilde{S}} \sum_{e \in \delta(v)}\left(\left(1 - \frac{|e \cap \tilde{S}| - 1}{|e \cap \tilde{S}|}\right) \cdot w(e)\right)\\
            &= \sum_{v \in \tilde{S}} \left(\deg_w(e) - \sum_{e \in \delta(v)}\frac{|e \cap \tilde{S}| - 1}{|e \cap \tilde{S}|} \cdot w(e)\right).
        \end{align*}
        Hence $\mathbb{E}[w(E(\tilde{S}))]$ can be written as
	    \begin{equation}\label{eq:eps-s-tilde}
	        \mathbb{E}[w(E(\tilde{S}))] = \mathbb{E}\left[\sum_{v \in \tilde{S}}\deg_w(v)\right] - \mathbb{E}\left[\sum_{v \in \tilde{S}}\sum_{e \in \delta(v)} \frac{|e \cap \tilde{S}| - 1}{|e \cap \tilde{S}|} \cdot w(e)\right].
	    \end{equation}
        We will then focus on upper-bounding the second term, which captures the extent to which edges are counted multiple times in the first term of the sum. We have
         \begin{align*}
	        \mathbb{E}&\left[\sum_{v \in \tilde{S}}\sum_{e \in \delta(v)} \frac{|e \cap \tilde{S}| - 1}{|e \cap \tilde{S}|} \cdot w(e)\right]\\
            &\leq \mathbb{E}\left[\sum_{v \in V''} \sum_{e \in \delta(v)} w(e) \cdot \mathbbm{1}[v \in \tilde{S} \wedge |e \cap \tilde{S}| \geq 2]\right]\\
	        &= \sum_{v \in V''} \sum_{e \in \delta(v)}  w(e) \cdot \mathbb{P}[v \in \tilde{S} \wedge |e \cap \tilde{S}| \geq 2]\\
	        &\leq \sum_{v \in V''}  \sum_{e \in \delta(v)}  w(e) \cdot \left(\sum_{v' \in e \cap V'' \backslash \{v\}}\mathbb{P}[v \in \tilde{S} \wedge v' \in \tilde{S}]\right) &\text{by union-bound}\\
	        &\leq \sum_{v \in V''} \sum_{e \in \delta(v)} w(e) \cdot \left(\sum_{v' \in e \cap V'' \backslash \{v\}}1/\rho^2\right) &\text{by Claim~\ref{claim:s-properties}}\\
            &\leq \sum_{v \in V''} \sum_{e \in \delta(v)} w(e) \cdot \left((\mu - 1) \cdot 1/\rho^2\right) &\text{as $|e \cap V'' \backslash \{v\}| \leq \mu - 1$}\\
	        &= (\mu - 1) \sum_{v \in V''} 1/\rho^2 \cdot \deg_w(v)\\
	        &= (\mu - 1)/\rho \cdot \mathbb{E}\left[\sum_{v \in \tilde{S}} \deg_w(v)\right]. &\text{as $\mathbb{P}[v \in \tilde{S}] = 1/\rho$ for all $v \in V''$}
	    \end{align*}
        where for the first inequality we apply the worst possible coefficient (namely, $1 \geq (\mu - 1)/\mu$) each time $e$ is covered more than once by $\tilde{S}$, and for the second inequality we use a union bound, namely, $\mathbb{P}[v \in \tilde{S} \wedge |e \cap \tilde{S}| \geq 2] \leq \sum_{v' \in e \cap V'' \backslash \{v\}}\mathbb{P}[v \in \tilde{S} \wedge v' \in \tilde{S}]$. As a result, combining with~(\ref{eq:eps-s-tilde}), we obtain 
        \[\mathbb{E}[w(E(\tilde{S}))] \geq (1 - (\mu - 1)/\rho) \cdot \mathbb{E}\left[\sum_{v \in \tilde{S}}\deg_w(v)\right].\]
	    Then this can be compared to $w(E(O^{out}))$ as we have
	    \[\mathbb{E}\left[\sum_{v \in \tilde{S}} \deg_w(v)\right]
	        \geq\, \mathbb{E}\left[\sum_{i = 1}^{|O^{out}|} \deg_w(s_i)\right]
	        \geq \sum_{i = 1}^{|O^{out}|} \deg_w(o_i)
	        \geq w(E(O^{out})),\]
	    where we use Claim~\ref{claim:ineg-k} for the first inequality (for $i = r$, \emph{i.e.}, $|\tilde{S}| \geq |O^{out}|$), Claim~\ref{claim:replace-greater} in the second inequality, and then we use that the sum of the weighted degrees is always greater than the actual weight of covered hyper-edges (because the hype-edges may be counted multiple times in the sum of the weighted degrees) for the last one.

	    As a result, we finally get:
	    \begin{multline*}
	        \mathbb{E}[w(E(S))] \geq w(E(O^{in})) + (1 - (\mu - 1)/\rho) \cdot w(E(O^{out})) - (\mu - 1)/\rho \cdot w(E(O^{in}))\\
	        \geq (1 - (\mu - 1)/\rho) \cdot w(E(O)).
	    \end{multline*}
	    Therefore by averaging principle, there exists $S^* \subseteq V'$ such that $S^* \in \mathcal{I}$ and \[w(E(S^*)) \geq (1 - (\mu - 1)/\rho) \cdot w(E(O))\]
     \end{proof}
	
	Now we give another interpretation of Algorithm~\ref{algo:kernel-construction} in terms of matroid union, which allows us to give a simpler description of the kernel $V'$. First recall the definition of matroid union:
	
	\definitionmatroidunion*

    In Algorithm~\ref{algo:simple-kernel-construction} we provide a simpler description of Algorithm~\ref{algo:kernel-construction} (the equivalence of the two algorithms is proved in Proposition~\ref{prop:equiv-algo-kernel}).
    
    \begin{algorithm}
	\caption{Algorithm for building a maximum coverage approximate kernel}\label{algo:simple-kernel-construction}
	\begin{algorithmic}[1]
	\State $V = \{v_1,\dots, v_n\}$ where $\deg_w(v_1) \geq \dots \geq \deg_w(v_n)$
	\State $V' \gets \emptyset$
	\For{$i = 1, \dots, n$} \Comment the vertices are processed in non-increasing order of weighted degree
	    \If{$V' \cup \{v_i\} \in \mathcal{I}_{\rho}$}
	        \State $V' \gets V' \cup \{v_i\}$
	    \EndIf
	\EndFor
	\State\Return $V'$
	\end{algorithmic}
	\end{algorithm}
    
    \begin{proposition} \label{prop:equiv-algo-kernel}
        Algorithm~\ref{algo:kernel-construction} and Algorithm~\ref{algo:simple-kernel-construction} build the same kernel $V'$. Moreover, $V'$ is a maximum weight independent set in $\rho \mathcal{M}$ with respect to the weighted degrees.
    \end{proposition}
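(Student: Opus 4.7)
Proof plan. The proposition has two parts. The second part (that $V'$ is a maximum-weight independent set in $\rho\mathcal{M}$) will be immediate once equivalence of the two algorithms is established, because Algorithm~\ref{algo:simple-kernel-construction} is exactly the standard matroid weight-greedy applied to the matroid $\rho\mathcal{M}$ with weights $\deg_w(v)$, which is classically known to return a maximum-weight base. So the bulk of the work is to prove that Algorithms~\ref{algo:kernel-construction} and~\ref{algo:simple-kernel-construction} output the same set.

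I would proceed by induction on $i$, proving that after the $i$-th iteration of each loop the set $V' \cup C$ maintained by Algorithm~\ref{algo:kernel-construction} coincides with the set $V'$ maintained by Algorithm~\ref{algo:simple-kernel-construction}; call these sets $K_1$ and $K_2$ respectively. Since each iteration either inserts $v_i$ into the tentative set or skips it, the inductive step reduces to showing the two binary decisions agree. Algorithm~\ref{algo:simple-kernel-construction} keeps $v_i$ iff $K_2 \cup \{v_i\} \in \mathcal{I}_\rho$, and Algorithm~\ref{algo:kernel-construction} keeps $v_i$ iff $v_i \notin \spn_{\mathcal{M}}(V')$ (where $V'$ refers to Algorithm~\ref{algo:kernel-construction}'s intermediate variable). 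Using the induction hypothesis $K_1 = K_2$, the task becomes the equivalence
\[v_i \in \spn_{\mathcal{M}}(V') \iff K_1 \cup \{v_i\} \notin \mathcal{I}_\rho.\]

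For the forward direction I would exploit the decomposition $V' = X_1 \cup \cdots \cup X_r$: each $X_l$ is a $\rho$-DBS in $\mathcal{M}/(X_1 \cup \cdots \cup X_{l-1})$ of density exactly $\rho$, so telescoping through Prop.~\ref{prop:rank-contraction} yields $|V'| = \rho \cdot \rnk_{\mathcal{M}}(V')$; if $v_i$ is spanned by $V'$, then the subset $V' \cup \{v_i\} \subseteq K_1 \cup \{v_i\}$ has density strictly greater than $\rho$ and therefore violates $\mathcal{I}_\rho$. For the converse I would establish as a loop invariant that $K_1 \in \mathcal{I}_\rho$: by Edmonds's theorem this amounts to showing every subset of $K_1$ has density at most $\rho$ in $\mathcal{M}$. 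Splitting an arbitrary $U \subseteq K_1$ as $U_1 \cup U_2$ with $U_1 \subseteq V'$ and $U_2 \subseteq C$, the bound $|U_2| \leq \rho \cdot \rnk_{\mathcal{M}/V'}(U_2)$ follows from Claim~\ref{claim:line-ineg-kernel}, the bound $|U_1| \leq \rho \cdot \rnk_{\mathcal{M}}(U_1)$ follows from an inner induction exploiting the DBS structure of the $X_l$'s, and Prop.~\ref{prop:rank-contraction} together with Prop.~\ref{prop:subset_increased_rho} combines them into $|U| \leq \rho \cdot \rnk_{\mathcal{M}}(U)$. Prop.~\ref{prop:add-one-densest} ensures the density condition on $C$ in $\mathcal{M}/V'$ survives the insertion of an unspanned $v_i$, giving the converse direction and closing the induction.

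The main obstacle will be establishing the invariant $V' \in \mathcal{I}_\rho$ (and its consequence $|V'| = \rho \cdot \rnk_{\mathcal{M}}(V')$), since Claim~\ref{claim:line-ineg-kernel} only controls $C$ inside the contracted matroid $\mathcal{M}/V'$, whereas $\mathcal{I}_\rho$-membership is a condition in $\mathcal{M}$ itself. I will have to lift density bounds successively from each $\mathcal{M}/(X_1 \cup \cdots \cup X_{l-1})$ back up to $\mathcal{M}$ using the contraction-to-rank formula of Prop.~\ref{prop:rank-contraction}. Once this rank-bookkeeping is in place, the splitting argument and the equivalence of skip conditions follow mechanically.
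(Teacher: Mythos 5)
Your proposal is correct and follows essentially the same route as the paper: both reduce the equivalence to showing that, at each iteration, the skip test $v_i \in \spn_{\mathcal{M}}(V')$ in Algorithm~\ref{algo:kernel-construction} coincides with the test $V' \cup C \cup \{v_i\} \notin \mathcal{I}_{\rho}$ used in Algorithm~\ref{algo:simple-kernel-construction}, relying on the DBS decomposition $V' = X_1 \cup \dots \cup X_j$, Claim~\ref{claim:line-ineg-kernel}, Proposition~\ref{prop:add-one-densest}, and Theorem~\ref{thm:edm-partition}, and then conclude by greedy optimality in $\rho\mathcal{M}$. The only minor difference is that the paper certifies $\mathcal{I}_{\rho}$-membership by explicitly assembling a partition into $\rho$ independent sets (bases of the $X_l$'s combined with an Edmonds partition of $C \cup \{v_i\}$), whereas you verify the density condition $|U| \leq \rho \cdot \rnk_{\mathcal{M}}(U)$ on all subsets via rank telescoping with Propositions~\ref{prop:rank-contraction} and~\ref{prop:subset_increased_rho}; both are the same application of Edmonds' theorem.
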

    
    \begin{proof}
        To prove the first part of the proposition, it suffices to prove that the condition in Line~\ref{line:kernel-dismiss} of Algorithm~\ref{algo:kernel-construction} is equivalent to check whether $V' \cup C \cup \{v_i\}$ is or is not in $\mathcal{I}_{\rho}$ (if so, then $V' \cup C$ in Algorithm~\ref{algo:kernel-construction} plays the role of $V'$ in Algorithm~\ref{algo:simple-kernel-construction}).
        
        Using the decomposition in equation~(\ref{eq:kernel-decomp}), we know that when $v_i$ is processed and we check the condition in Line~\ref{line:kernel-dismiss} of Algorithm~\ref{algo:kernel-construction}, we have $V' = X_1 \cup \dots \cup X_j$ for some $j \in \{0, \dots, r\}$. We know that each $X_l$ is a $\rho$-DBS in $\mathcal{M}/(\bigcup_{l'=1}^{l-1}X_{l'})$, hence each one can be partitioned into $\rho$ independent sets $B_{l,1}, \dots, B_{l,\rho}$ in $\mathcal{M}/(\bigcup_{l'=1}^{l-1}X_{l'})$, all of size $\rnk_{\mathcal{M}/(\bigcup_{l'=1}^{l-1}X_{l'})}(X_l)$. Therefore,
        \[V' = \underbrace{(B_{1,1} \cup \dots \cup B_{j,1})}_{B_1} \cup \dots \cup \underbrace{(B_{1, \rho} \cup \dots \cup B_{j, \rho})}_{B_{\rho}}\]
        can be partitioned into $\rho$ independent sets in $\mathcal{M}$, each of size $\rnk_{\mathcal{M}}(V')$. Now, regarding condition in Line~\ref{line:kernel-dismiss} of Algorithm~\ref{algo:kernel-construction}:
        \begin{itemize}
            \item If $v_i \in \spn_{\mathcal{M}}(V')$, as $V'$ is a set containing $\rho \cdot \rnk_{\mathcal{M}}(X)$ elements, the set $V' \cup \{v_i\}$ will contain $\rho \cdot \rnk_{\mathcal{M}}(X) + 1$ elements while still being of rank equal to $\rnk_{\mathcal{M}}(V')$: hence it is not possible to partition $V' \cup \{v_i\}$ into $\rho$ independent sets (as each one can contain up to $\rnk_{\mathcal{M}}(V')$ elements), so $V' \cup C \cup \{v_i\} \not\in \mathcal{I}_{\rho}$.
            \item Otherwise, by Claim~\ref{claim:line-ineg-kernel}, the densest subset in $C$ with respect to the matroid $\mathcal{M}/V'$ is of density strictly below $\rho$, hence, by Proposition~\ref{prop:add-one-densest}, the densest subset in $C \cup \{v_i\}$ is of density at most $\rho$. We can therefore use Theorem~\ref{thm:edm-partition} to partition $C \cup \{v_i\}$ into $\rho$ independent sets $C_1, \dots, C_{\rho}$ in $\mathcal{M}/V'$, hence $V' \cup C \cup \{v_i\} = (B_1 \cup C_1) \cup \dots \cup (B_{\rho} \cup C_{\rho})$ can be partitioned into $\rho$ independent subsets, \emph{i.e.}, $V' \cup C \cup \{v_i\} \in \mathcal{I}_{\rho}$.
        \end{itemize}
        
        Therefore Algorithms~\ref{algo:kernel-construction} and~\ref{algo:simple-kernel-construction} build the very same approximate kernel $V'$. As Algorithm~\ref{algo:simple-kernel-construction} is simply the greedy algorithm to build a maximum weight independent set in the matroid $\rho \mathcal{M}$ with respect to the weighted degrees (see~\cite{Edmonds1971}), $V'$ is a maximum weight independent set in $\rho \mathcal{M}$.
    \end{proof}
    
    \theoremunion*
    
    \begin{proof}
        In fact, from the characterization of maximum weight independent sets in~\cite{Edmonds1971}, any maximum weight independent set in $\rho \mathcal{M}$ can be obtained by choosing the right processing order in Algorithm~\ref{algo:simple-kernel-construction} (\emph{i.e.}, for elements having the same weighted degrees, putting first the ones we want to pick in our kernel). Hence that kernel would have been built following the procedure of Algorithm~\ref{algo:kernel-construction}, and therefore the result of Lemma~\ref{lem:ratio-kernel} applies.
    \end{proof}
    
\bibliography{library}

\appendix

\section{Sequential Random Sampling in DBSes}
    \label{sec:random-seq}

    \paragraph*{Sampling Technique} Here we give a random sampling procedure providing weaker negative correlation guarantees than Proposition~\ref{prop:random-sampling} (as we guarantee only a pairwise negative correlation up to a factor $2$), but that takes advantage of the particular structure of DBSes to pick elements \emph{sequentially} in the Density-Balanced Subset. This approach may be of interest as it is very different from the randomized rounding developed in~\cite{ChekuriVZ10}.

    \begin{proposition} \label{prop:random-seq}
        Let $\mathcal{M} = (V, \mathcal{I})$ be a matroid, $V' \subseteq V$ be a $\rho$-DBS for some positive integer $\rho$, and $k = \rnk_{\mathcal{M}}(V')$. The exists a sequential procedure to sample randomly from $V'$ an independent set of $k$ elements $S = \{s_1, \dots, s_k\} \in \mathcal{I}$ such that:
        \begin{enumerate}[(i)]
            \item for all $v \in V'$, $\mathbb{P}[v \in S] = 1/\rho$;
            \item for all $v, v' \in V'$ such that $v \neq v'$, $\mathbb{P}[v \in S \wedge v' \in S] \leq 2 \cdot \mathbb{P}[v \in S] \cdot \mathbb{P}[v' \in S] = 2/\rho^2$.
        \end{enumerate}
    \end{proposition}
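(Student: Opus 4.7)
The plan is to induct on $k = \rnk_{\mathcal{M}}(V')$ using the following sequential sampling procedure. If $k = 0$, return $\emptyset$. Otherwise, pick $s_1$ uniformly at random from $V'$; by Proposition~\ref{prop:dbs-contraction}, there exists a $\rho$-DBS $V'' \subseteq V' \setminus \{s_1\}$ of rank $k-1$ in the contracted matroid $\mathcal{M}/\{s_1\}$, which I extract through a randomized construction described below; recurse on $V''$ in $\mathcal{M}/\{s_1\}$ to produce $\{s_2, \ldots, s_k\}$. Independence of $S = \{s_1, \ldots, s_k\}$ in $\mathcal{M}$ follows directly: by the inductive hypothesis, $\{s_2, \ldots, s_k\}$ is independent in $\mathcal{M}/\{s_1\}$, which by definition of the contraction is equivalent to $\{s_1, \ldots, s_k\}$ being independent in $\mathcal{M}$. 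The base case $k = 1$ satisfies $|V'| = \rho$, so each element is picked with probability $1/\rho$ and the pairwise condition is vacuous.

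The central ingredient of the inductive step is choosing the extraction of $V''$ randomly and symmetrically in $s_1$. Following the proof of Proposition~\ref{prop:dbs-contraction}, I would first draw a partition $V' = B_1 \cup \cdots \cup B_\rho$ into independent bases of size $k$ (as guaranteed by Proposition~\ref{prop:dbs-partition}) uniformly at random, then build $V''(s_1)$ by removing $s_1$ from its block and, in every other block $B_i$, dropping a uniformly random element of the exchange circuit $C(s_1, B_i) \cap B_i$. By symmetry of this construction, every $v \in V' \setminus \{s_1\}$ lies in $V''(s_1)$ with the same probability, which by a counting argument (size $\rho(k-1)$ within $V' \setminus \{s_1\}$ of size $\rho k - 1$) equals $\rho(k-1)/(\rho k - 1)$. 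Property~(i) then follows from
\[
\mathbb{P}[v \in S] = \frac{1}{\rho k} + \sum_{s \neq v} \frac{1}{\rho k} \cdot \mathbb{P}[v \in V''(s)] \cdot \frac{1}{\rho} = \frac{1}{\rho k} + \frac{\rho k - 1}{\rho k} \cdot \frac{\rho(k-1)}{\rho k - 1} \cdot \frac{1}{\rho} = \frac{1}{\rho}.
\]

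For property~(ii), I would split $\mathbb{P}[\{v, v'\} \subseteq S]$ according to the value of $s_1$. When $s_1 \in \{v, v'\}$, the conditional probability that the remaining element is drawn by the recursion is at most $1/\rho$ by the inductive property~(i), so this boundary contribution is at most $2 \cdot (1/(\rho k)) \cdot (1/\rho) = 2/(\rho^2 k)$. When $s_1 \notin \{v, v'\}$, the inductive pairwise bound applied inside $V''(s_1)$ gives
\[
\mathbb{P}[\{v, v'\} \subseteq S \mid s_1 = s] \leq \frac{2}{\rho^2} \cdot \mathbb{P}[\{v, v'\} \subseteq V''(s)],
\]
and averaging over $s \notin \{v, v'\}$ using the same partition-based symmetry of the extraction yields a bulk contribution of at most $(2/\rho^2)(1 - 1/k)$; adding the boundary term $2/(\rho^2 k)$ gives the stated $2/\rho^2$.

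The main obstacle will be the randomized extraction of $V''$ from $s_1$: one must ensure that both the first-order marginals are uniformly $\rho(k-1)/(\rho k - 1)$ and that the second-order marginals $\mathbb{P}[\{v, v'\} \subseteq V''(s)]$ are controlled enough to close the factor-$2$ slack as the induction is iterated. This relies critically on the symmetric partition structure provided by Proposition~\ref{prop:dbs-partition} and on matroid exchange along the circuits $C(s_1, B_i)$ to justify choosing the dropped element uniformly at random within each block, which is the only step of the argument making essential use of the integer density property of a DBS.
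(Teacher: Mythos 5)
The central step of your inductive argument does not hold: after picking $s_1$ uniformly and extracting a contracted $\rho$-DBS $V''(s_1) \subseteq V' \setminus \{s_1\}$, it is \emph{not} true that every $v \neq s_1$ lies in $V''(s_1)$ with the same probability $\rho(k-1)/(\rho k - 1)$, no matter how you randomize the partition and the circuit exchanges. The inclusion probabilities are dictated by the matroid structure, not by a counting symmetry over a set of size $\rho k - 1$: any element that becomes a loop in $\mathcal{M}/\{s_1\}$ can never belong to a $\rho$-DBS of the contraction (a loop has infinite density), so its conditional inclusion probability is $0$. The laminar matroid of Figure~\ref{fig:laminar-matroid} with $\rho = 4$, $k = 2$ is a concrete counterexample: contracting $v_1$ turns $v_2$ into a loop, so $\mathbb{P}[v_2 \in V''(v_1)] = 0 \neq 4/7$, and with a uniform first pick your formula for $\mathbb{P}[v \in S]$ breaks --- elements in small tight blocks are under-sampled and the others over-sampled, so property~(i) fails. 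This is precisely why the paper's proof does \emph{not} draw the first element uniformly: it fixes one contracted DBS $V'_i$ per element $v_i$ (via Proposition~\ref{prop:dbs-contraction}) and then chooses the first element according to a non-uniform distribution $\{p_i\}$ solving the balance system~(\ref{eq:syst-pi}), obtained as a stationary distribution of a row-stochastic matrix; in the worked example of Table~\ref{tab:example} these probabilities are $1/12, 1/4, 0, 0, 0, 1/6, 1/4, 1/4$, visibly far from uniform.

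Your treatment of property~(ii) inherits the same unproven symmetry (the claimed averaging of $\mathbb{P}[\{v,v'\} \subseteq V''(s)]$ to a factor $1 - 1/k$), and you yourself flag that controlling the second-order marginals ``to close the factor-$2$ slack'' is an open obstacle --- so the pairwise bound is not established either. In the paper this difficulty disappears once the first-pick distribution satisfies~(\ref{eq:syst-pi}): conditioning on the first element, the boundary terms $p_i/\rho + p_j/\rho$ can be rewritten via the balance equations as $\frac{1}{\rho^2}\bigl(\sum_{l : v_i \notin V'_l} p_l + \sum_{l : v_j \notin V'_l} p_l\bigr)$, and together with the inductive bound $2/\rho^2$ on the bulk term this telescopes exactly to $2/\rho^2$ with no accumulation of constants. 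To repair your argument you would either have to prove uniform (and suitably negatively correlated) inclusion probabilities for a randomized extraction of $V''(s_1)$ --- which the loop obstruction shows is impossible with a uniform first pick --- or abandon uniformity of the first pick and solve for the correcting distribution, which is the paper's route.
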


    \begin{proof}
        We proceed by induction on $k = \rnk_{\mathcal{M}}(V')$, hence picking elements one by one into $S$.
        
        Consider the case $k = 1$. The set $V'$ is made of $\rho$ element, and none of these elements is a loop (if some $v \in V'$ were a loop, then $\{v\}$ would be a set in infinite density, contradicting that $V'$ was a $\rho$-DBS). Therefore by putting into $S$ one element $s_1$ chosen uniformly at random in $V'$ we get the desired result.
        
        Now let $k \geq 2$ and suppose that the result is true for any $\rho$-DBS of rank $k - 1$, in any matroid. Let $V'$ be a $\rho$-DBS of rank $k$. In the following, we will use an arbitrary indexing $v_1,\dots,v_{k\rho}$ of the elements of $V' = \{v_1,\dots,v_{k\rho}\}$.
        
        By Proposition~\ref{prop:dbs-contraction}, for each $v_i \in V'$, there exists a set $V'_i \subseteq V'\backslash\{v_i\}$ such that $V'_i$ is a $\rho$-DBS in $\mathcal{M}/\{v_i\}$ and $\rnk_{\mathcal{M}/\{v_i\}}(V'_i) = k - 1$. Moreover, by the induction hypothesis, we know that it is possible to sample uniformly at random from $V'_i$ an independent set $S_i = \{s_1,\dots,s_{k-1}\}$ in $\mathcal{M}/\{v_i\}$ satisfying properties~(i) and~(ii).
        
        Here is how we will randomly sample an independent set $S$ of $k$ elements from $V'$. We will choose a probability distribution $\{p_1, \dots, p_{k\rho}\}$ on $V'$. Then we will build $S$ by first sampling the element $s_k$ from $V'$ according to this distribution (\emph{i.e.}, sampling $v_i$ with probability $p_i$) and then sample the remaining elements $S_i = \{s_1, \dots, s_{k-1}\}$ recursively in the contracted $\rho$-DBS $V'_i$, so that $S_i$ is independent in $\mathcal{M}/\{v_i\}$.
        
        Therefore, we know that the probability that $v_i \in S$ is:
        \begin{align*}
            \mathbb{P}[v_i \in S] &= \mathbb{P}[s_k = v_i] + \sum_{j \neq i}\mathbb{P}[s_k = v_j] \cdot \mathbb{P}[v_i \in S \,|\, s_k = v_j]\\
            &= p_i + \sum_{j : v_i \in V'_{j}}p_j \cdot 1/\rho\\
            &= p_i + \frac{1}{\rho} \cdot \left(1 - \sum_{j : v_i \not\in V'_{j}} p_j\right),
        \end{align*}
        where to move from the first to the second line we used the induction hypothesis (property~(i)).
        
        Now we want to choose the $p_i$s so that for all $i$, we have $\mathbb{P}[v_i \in S] = 1/\rho$, hence this leads to the following system of linear equations:
        \begin{equation} \label{eq:syst-pi}
            \forall\,1 \leq i \leq k \cdot \rho, \quad p_i = \frac{1}{\rho} \sum_{j : v_i \not\in V'_{j}} p_j
        \end{equation}
        To prove that there exists probability distribution $\{p_i\}_{1 \leq i \leq k\rho}$ satisfying the constraints of~(\ref{eq:syst-pi}) we introduce the $k\rho \times k\rho$ matrix $T$ defined as follows:
        \[
	        T_{i,j} = \left\{
            \begin{array}{ll}
                1/\rho & \text{if } v_j \not\in V'_{i} \\
                0 & \text{otherwise}
            \end{array}
        \right.
	    \]
	    For all $1 \leq i \leq k \cdot \rho$, by construction of the $V'_{i}$s, exactly $\rho$ elements $v_j$ are not in $V'_{i}$. Hence $T$ is a real matrix with non-negative entries and each of its rows sums to $1$. Therefore, $T$ is a right stochastic matrix.  As a result, there exists a stationary distribution $\mu = (\mu_1, \dots, \mu_{k \rho})$ such that $\mu = \mu T$. By setting $p_i = \mu_i$, we obtain the desired probability distribution, as being a solution of~(\ref{eq:syst-pi}) is equivalent to being an eigenvector of the matrix $T$ with eigenvalue $1$.
	    
	    Hence the choice of the $p_i$s guarantees that property~(i) is satisfied. Now we want to prove that our sampling also satisfies property~(ii), $\emph{i.e.}$, given any $i \neq j$, we have $\mathbb{P}[v_i \in S \wedge v_j \in S] \leq 2/\rho^2$. And in fact, we have
	    \begin{align*}
	        \mathbb{P}[v_i \in S \wedge v_j \in S] &= \sum_{l = 1}^{k\rho} \mathbb{P}[s_k = v_l] \cdot \mathbb{P}[v_i \in S \wedge v_j \in S \,|\, s_k = v_l]\\
	        &= p_i \cdot \mathbb{P}[v_j \in S \,|\, s_k = v_i] + p_j \cdot \mathbb{P}[v_i \in S \,|\, s_k = v_j]\\
	        &\quad\quad+ \sum_{l \neq i,j} p_l \cdot \mathbb{P}[v_i \in S \wedge v_j \in S \,|\, s_k = v_l]\\
	        &\leq p_i \cdot 1/\rho + p_j \cdot 1/\rho + \sum_{l : v_i, v_j \in V'_l} p_l \cdot 2/\rho^2\\
	        &= \frac{1}{\rho^2} \cdot \left(\sum_{l : v_i \not\in V'_l} p_l + \sum_{l : v_j \not\in V'_l} p_l + 2 \sum_{l : v_i, v_j \in V'_l} p_l\right) &\text{by~(\ref{eq:syst-pi})}\\
	        &\leq \frac{1}{\rho^2} \cdot \left(\sum_{l : v_i \not\in V'_l} p_l + \sum_{l : v_j \not\in V'_l} p_l + \sum_{l : v_i \in V'_l} p_l + \sum_{l : v_j \in V'_l} p_l\right)\\
	        &= \frac{2}{\rho^2},
	    \end{align*}
	    where in the first inequality we use the induction hypothesis and that if $v_i$ (resp $v_j$) is not in $V'_{j}$ (resp $V'_{i}$) then $\mathbb{P}[v_i \in S \,|\, s_k = v_j] = 0$ (resp $\mathbb{P}[v_j \in S \,|\, s_k = v_i] = 0$). This concludes the induction step.
    \end{proof}

    \paragraph*{Example for a Laminar Matroid} We provide here an example of random sampling procedure for the $\rho$-DBS of Figure~\ref{fig:laminar-matroid}, with $\rho = 4$. First, a subset $V'_i$ is associated to each $v_i$, as described in the proof of Proposition~\ref{prop:random-seq}: $V'_i$ has to be a $\rho$-DBS of rank $1$ in $\mathcal{M}/\{v_i\}$; such sets $V'_i$ are provided in the second column of Table~\ref{tab:example}. Then based on the choice of these sets, we can choose a probability distribution $\{p_i\}_{1 \leq i \leq 2\rho}$ satisfying the system~(\ref{eq:syst-pi}), \emph{i.e.} being a fixed point of the matrix $T$~(\ref{eq:matrix-example}); such probability distribution is given in the third column of Table~\ref{tab:example}.

    \begin{table}[h]
    \centering
    \begin{tabular}{c|c|c} 
         $v_i$ & $V'_i$ & $p_i$\\
         \hline
         $v_1$ & $\{v_3, v_4, v_5, v_6\}$ & $1/12$\\
         $v_2$ & $\{v_3, v_4, v_5, v_6\}$ & $1/4$\\
         $v_3$ & $\{v_1, v_2, v_6, v_7\}$ & $0$\\
         $v_4$ & $\{v_1, v_2, v_6, v_7\}$ & $0$\\
         $v_5$ & $\{v_1, v_2, v_6, v_7\}$ & $0$\\
         $v_6$ & $\{v_1, v_3, v_4, v_5\}$ & $1/6$\\
         $v_7$ & $\{v_1, v_3, v_4, v_5\}$ & $1/4$\\
         $v_8$ & $\{v_1, v_3, v_4, v_5\}$ & $1/4$\\
    \end{tabular}
    \caption{\label{tab:example}Parameters for the random sampling procedure derived from the proof of Proposition~\ref{prop:random-seq} for the $\rho$-DBS depicted in Figure~\ref{fig:laminar-matroid} ($\rho = 4$).}
    \end{table}

    \begin{equation}\label{eq:matrix-example}
        T = 
        \begin{pmatrix}
            1/4 & 1/4 & 0 & 0 & 0 & 0 & 1/4 & 1/4\\
            1/4 & 1/4 & 0 & 0 & 0 & 0 & 1/4 & 1/4\\
            0 & 0 &  1/4 & 1/4 & 1/4 & 0 & 0 &  1/4\\
            0 & 0 &  1/4 & 1/4 & 1/4 & 0 & 0 &  1/4\\
            0 & 0 &  1/4 & 1/4 & 1/4 & 0 & 0 &  1/4\\
            0 & 1/4 & 0 & 0 & 0 & 1/4 & 1/4 & 1/4\\
            0 & 1/4 & 0 & 0 & 0 & 1/4 & 1/4 & 1/4\\
            0 & 1/4 & 0 & 0 & 0 & 1/4 & 1/4 & 1/4
        \end{pmatrix}
    \end{equation}

    At the first step, an element $v_i$ is sampled with probability $p_i$ and then the second element is chosen uniformly at random among the elements of the associated set $V'_i$. We can observe here that the elements $v_3$, $v_4$, and $v_5$ are not sampled in the first step, but that their probability to be sampled during the procedure is still $1/\rho$ because they will always appear in $V'_i$ for the second round. Similarly, $p_8$ is set $1/4$ because it will never appear in the second round. One can easily check that this procedure guarantees that each element has the same probability of being sampled and that the elements are (almost) negatively correlated. 
    
\end{document}